\documentclass[submission,creativecommons]{eptcs}
\providecommand{\event}{LSFA 26} 

\usepackage{iftex}

\ifpdf
  \usepackage{underscore}         
  \usepackage[T1]{fontenc}        
\else
  \usepackage{breakurl}           
\fi

\usepackage{proof}
\usepackage{amsmath}
\usepackage{amsfonts}
\usepackage{graphicx}
\usepackage{enumitem}
\usepackage{suffix}

\newtheorem{definition}{Definition}
\newtheorem{lemma}{Lemma}
\newtheorem{proposition}{Proposition}

\newcommand{\mybox}{\raisebox{3.5pt}{\framebox[7pt]{\rule{0mm}{0.3pt}}}}

\def\qed{\ifmmode\mybox\else{\unskip\nobreak\hfil
\penalty50\hskip1em\null\nobreak\hfil\mybox
\parfillskip=0pt\finalhyphendemerits=0\endgraf}\fi}

\newenvironment{proof}{%

\vspace{.7ex}
\noindent{\it Proof.} \rm}{\qed}

\def\joinrelm{\mathrel{\mkern-3mu}}
\def\relbar{\mathrel{\smash-}}
\def\tailpiece{\mathchoice
  {\vrule height 1ex width 0.3ex depth -.1ex}%
  {\vrule height 1ex width 0.3ex depth -.1ex}%
  {\vrule height 0.7ex width 0.3ex depth -.1ex}%
  {\vrule height 0.7ex width 0.3ex depth -.1ex}}
\def\seqsym{\mathrel{\tailpiece\joinrelm\relbar}}

\newcommand{\bt}{{\scriptstyle\bot}}
\newcommand{\iabs}[1]{(\lambda\,#1)}
\WithSuffix\newcommand\iabs*[1]{\lambda\,#1}
\newcommand{\iapp}[2]{(#1\,#2)}
\WithSuffix\newcommand\iapp*[2]{#1\,#2}

\newcommand{\sbstt}[3]{#1[#2 := #3]}

\newcommand{\liftt}[3]{{\uparrow}^{#2}_{#3}(#1)}

\newcommand{\tseq}[3]{#1 \, \seqsym \, #2 : #3}
\newcommand{\tytm}[4]{\{#1\}\;#2:#3\;\{#4\}}

\newcommand{\der}[1]{\vdash_{#1}}
\newcommand{\Sst}{{\textsc{st}}}

\newcommand{\Sfll}{{\textsc{fll}}}
\newcommand{\Sll}{{\textsc{ll}}}

\newcommand{\ltm}{\mathord{\mathbf{lt}}}
\newcommand{\cst}{\mathord{\mathbf{c}}}
\newcommand{\ty}{\mathord{\mathbf{ty}}}
\newcommand{\qty}{\mathord{\mathbf{qty}}}
\newcommand{\aty}{\mathord{\mathbf{a}}}
\newcommand{\env}{\mathord{\mathbf{env}}}
\newcommand{\fenv}{\mathord{\mathbf{fenv}}}

\newcommand{\occ}[2]{\mathord{\mathrm{occ}}(#1,#2)}
\newcommand{\fv}[1]{\mathord{\mathrm{fv}}(#1)}
\newcommand{\minenv}{\mathord{\bot\hspace{-1.55ex}\bot}}

\title{A Typing System for the Linear Lambda-Calculus\\ in \mbox{de Bruijn} Notation%
\thanks{%
The research leading to these results has received funding from the European Research Council (ERC)
under the European Union's Ninth Framework Programme Horizon Europe (ERC Synergy Project Malinca,
Grant Agreement n. 101167526).}} 

\author{%
Philippe de Groote
\institute{LORIA, UMR 7503,\\ Universit{\'e} de Lorraine, CNRS, Inria,\\ 54000 Nancy, France}
\and
Vincent Tourneur
\institute{LORIA, UMR 7503,\\ Universit{\'e} de Lorraine, CNRS, Inria,\\ 54000 Nancy, France}
}

\newcommand{\titlerunning}{A Typing System for the Linear Lambda-Calculus in \mbox{de Bruijn} Notation}
\newcommand{\authorrunning}{Ph. de Groote and V. Tourneur}

\hypersetup{
  bookmarksnumbered,
  pdftitle    = {\titlerunning},
  pdfauthor   = {\authorrunning},
  pdfsubject  = {LFSA26 submission}, 
  pdfkeywords = {Linear lambda-calculus, Type sytem}
}

\begin{document}

\maketitle

\begin{abstract}
We introduce a typing system that is particularly well suited for typing the linear $\lambda$-calculus
in de Bruijn notation. This typing discipline, which is reminiscent of Hodas' and Miller's model of resource
consumption, guarantees that any well-typed term is linear without the need for an occurrence check.
We then establish the subject reduction property.
\end{abstract}

\section{Introduction}

Typed lambda calculus plays an important role in several domains of computer science.
It is the basis of typed functional programming, serves as a foundation to proof assistants and logical
frameworks, and also plays its part in type theoretic approaches to computational linguistics.

One of the problems one faces when implementing the lambda calculus is managing the names of the bound variables
and their possible renaming. A common way to get around this is to rely on nameless encoding of the bound
variables.
De Bruijn notation~\cite{deBruijn:indices} is such an encoding.
It is a canonical way of expressing $\lambda$-terms without using names
for representing the bound occurrences of variables and therefore avoiding all the difficulties related
to alpha-conversion, such as possible clashes between free and bound occurrences of a same variable.
This notation has been designed for the automatic manipulation of $\lambda$-terms and is particularly
well-suited for implementation.

In de Bruijn notation, the bound occurrences of a variable are represented by natural numbers
(called \emph{de Bruijn indices}) that may be interpreted as pointers to their binding lambda.
As for the free occurrences of a variable, they may be interpreted as pointers to a position in some kind of
external environment of variables (typically, a typing environment in the case of a
typed $\lambda$-calculus).
For this reason, the use of de Bruijn indices works quite smoothly with application typing rules of the
following kind:
\begin{align*}
\infer{
\tseq{\Gamma}{\iapp{t}{u}}{\beta}}{
\tseq{\Gamma}{t}{\alpha \rightarrow \beta}
&
\tseq{\Gamma}{u}{\alpha}}
\end{align*}
where the free occurrences of a same variable in both $t$ and $u$ are interpreted as a pointer to some position
in the typing environment $\Gamma$.
Consequently, the same free variable can occur in both t and u, and the above typing rule implicitly includes
possible contraction rules. 

Using the vocabulary of linear logic~\cite{Girard:LL}, the above typing rule is called an \emph{additive}
rule because the environment that appears in the conclusion (namely, $\Gamma$) also appears in both premises.
This contrasts drastically with the application typing rule that is used in the case of the linear
$\lambda$-calculus, for which contraction is forbidden with the consequence that each variable occurring
in a term (free or bound) must occur exactly once:
\begin{align*}
\infer{
  \tseq{\Gamma,\Delta}{\iapp{t}{u}}{\beta}}{
  \tseq{\Gamma}{t}{\alpha \rightarrow \beta}
&
  \tseq{\Delta}{u}{\alpha}}
\end{align*}
with the proviso that the domains of $\Gamma$ and $\Delta$ must be disjoint.

In the words of linear logic, the linear application typing rule is called a \textit{multiplicative rule}.
Such multiplicative rules do not fit smoothly the use of de Bruijn indices, because the variable corresponding
to a given position in the environment $\Gamma$ differs from the variable corresponding to this same position
in the environment $\Delta$. For this reason, de Bruijn notation is not well suited for multiplicative rules
and therefore does not seem to be suitable for the case of the linear lambda calculus.

The purpose of this paper is to circumvent the above difficulty and to provide a typing system for the linear
$\lambda$-calculus that accommodates $\lambda$-terms in de Bruijn notation nicely. To this end, we introduce
a notion of \emph{fragmentary environment} that allows multiplicative rules to be simulated by additive rules.

The rest of the paper is organized as follows:
\begin{itemize}
\item In the next section, we briefly explain how the de Bruijn indices work.
  We then expose some of the mathematical preliminaries that are necessary for understanding of the paper,
  including the simply typed $\lambda$-calculus in de Bruijn notation and the notion of a linear
  $\lambda$-term.
\item In Section 3, we introduce the notion of a fragmentary environment. We then define some algebraic
  operations on fragmentary environments, operations that will turn out to be useful when establishing the
  mathematical properties of our main typing system. We also define a typing system for the linear $\lambda$-calculus, based
  on fragmentary environments. 
\item In Section 4, we outline the main typing system of the paper. This system allows the linear $\lambda$-terms
  (and only the linear $\lambda$-terms) in de Bruijn notation to be typed. It does not rely on any explicit
  proviso (such as an occurrence unicity condition, or a domain disjointness condition), and is reminiscent
  of Hodas' and Miller's model of resource consumption~\cite{Hodas-Miller} (from which we adopt the curly bracket notation).
\item In Section 5, we review how $\beta$-reduction is implemented using de Bruijn indices.
\item In Section~6, we prove that the typing system in Section 4 satisfies the subject reduction property.
\item Finally, in Section 7, we discuss related work and conclude.
\end{itemize}

\section{The simply typed lambda-Calculus in de Bruijn Notation}

In de Bruijn notation, occurrences of $\lambda$-variables are encoded by means of natural numbers called
de Bruijn indices.
In the case of a bound occurrence of a variable, the de Bruijn index corresponds to the number of $\lambda$'s
lying on the path that links, in the parse tree of the term, the bound occurrence to its binding $\lambda$.
For instance, the following $\lambda$-term (in usual notation):
\begin{align}\label{ltusual}
\lambda x y. \, y\,x\,(\lambda z.\,z\,x)
\end{align}
is encoded as follows:
\begin{align}\label{ltdb}
\iabs{\iabs{\iapp{\iapp{0}{1}}{\iabs{\iapp{0}{2}}}}}.
\end{align}
This is illustrated by Figure \ref{ltpt} that displays the parse tree of $\lambda$-term~(\ref{ltdb}).

\begin{figure}[ht]
\fbox{%
\begin{minipage}{.975\textwidth}
\mbox{}
\vspace{\parsep}
\begin{center} 
\includegraphics[width=3cm]{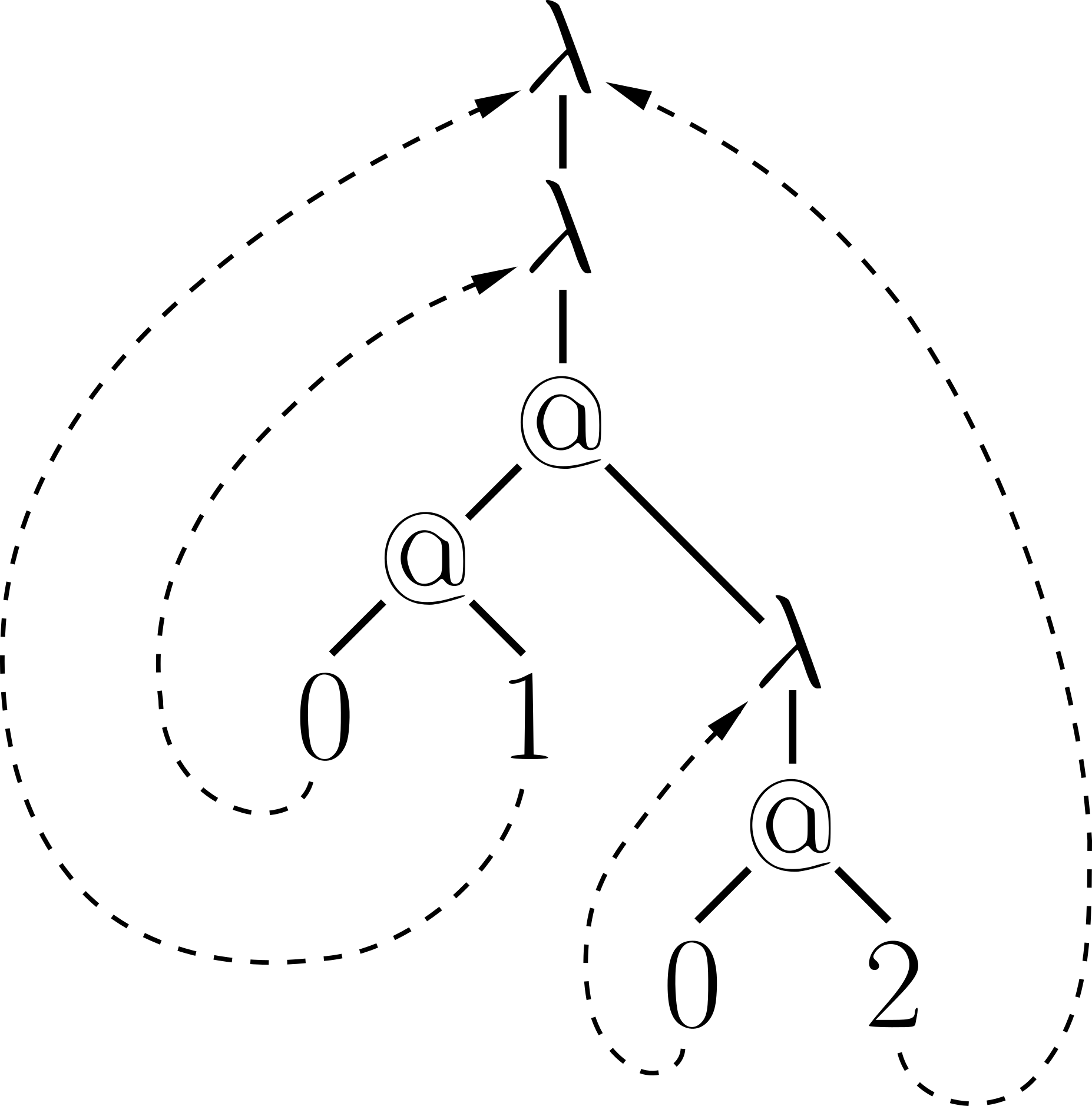}
\end{center}
\vspace{\parsep}
\end{minipage}
}
\caption{A $\lambda$-term in de Bruijn notation}\label{ltpt}
\end{figure}

When using de Bruijn notation, a $\lambda$-term consists of either a constant, an index, an abstraction,
or an application.
This is spelled out by the following definition.

\begin{definition}\label{ltdef}
Let $\cst$ be a set of constants. The set of $\lambda$-terms, $\ltm$, is inductively defined by the following
grammar:
\[
  \ltm  \; ::= \; \cst \;\;|\;\;  \mathrm{i}  \;\;|\;\; \iapp{\ltm}{\ltm} \;\;|\;\;  \iabs{\ltm}
  \quad\quad(\textrm{where }\mathrm{i} \in \mathbb{N})
\]
\end{definition}

From now on, when speaking of a $\lambda$-term, we will mean a $\lambda$-term in de Bruijn notation, i.e.,
a $\lambda$-term obeying Definition~\ref{ltdef}.
In a similar vein, when speaking of a variable or of the occurrence of a variable, we will mean the
corresponding index.

Let us now specify the usual typing system of the simply typed $\lambda$-calculus~\cite{Church:STT} for
$\lambda$-terms in de Bruijn notation.
We first remind the reader of the definition of a simple type.
Then we introduce the notion of a typing environment.

\begin{definition}
Let $\aty$ be a set of atomic types. The set of simple types, $\ty$, is inductively defined by
the following grammar:
\[
  \ty \; ::= \; \aty \;\; | \;\; (\ty \rightarrow \ty)
\]
\end{definition}

\begin{definition}
A typing environment (or an environment, for short) is defined to be a finite sequence of simple types.
Accordingly, the set of environments, $\env$,
is defined as follows:
\begin{align*}
\env = \bigcup_{n\in\mathbb{N}} \ty^n
\end{align*}
\end{definition}
We let lowercase Greek letters (from the beginning of the alphabet) range over simple types,
and uppercase Greek letters range over environments. Given an environment $\Gamma$, we write
$|\Gamma|$ for its length. For $0\leq i < |\Gamma|$, we write $\Gamma(i)$ for the $i^\textit{th}$ component
of $\Gamma$, starting to count from $0$ and numbering the components from right to left.
For instance, if $\Gamma = (\alpha,\beta,\gamma)$, $\Gamma(0)=\gamma$, $\Gamma(1)=\beta$, and $\Gamma(2)=\alpha$.

$\lambda$-terms are assigned simple types by means of typing judgements of the form
$\tseq{\Gamma}{t}{\alpha}$ where $\Gamma\in\env$, $t\in\ltm$, and $\alpha\in\ty$.

Let $\tau$ be a function that assigns to each constant $c\in\cst$ a type $\tau(c)\in\ty$.
The typing judgements are derived according to the axioms
and rules given in Figure \ref{tsstlc}.

\begin{figure}[ht]
\fbox{%
\begin{minipage}{.975\textwidth}
\mbox{}
\vspace{\parsep}
\begin{center}
\raisebox{1.5ex}{$%
\tseq{\Gamma}{c}{\tau(c)}\;\;\mbox{(\textsc{const})}
\quad\quad
\tseq{\Gamma, \alpha}{0}{\alpha}\;\;\mbox{(\textsc{var})}
$}
\quad\quad
$
\infer[\mbox{(\textsc{weak})}]{
\tseq{\Gamma, \beta}{\mathrm{i}+1}{\alpha }}{
\tseq{\Gamma}{\mathrm{i}}{\alpha }}
$
\end{center}
\begin{center}
$
\infer[\mbox{(\textsc{app})}]{
\tseq{\Gamma}{\iapp{t}{u}}{\beta}}{
\tseq{\Gamma}{t}{\alpha \rightarrow \beta}
&
\tseq{\Gamma}{u}{\alpha}}
\quad\quad
\infer[\mbox{(\textsc{abs})}]{
\tseq{\Gamma}{\iabs{t}}{\alpha \rightarrow \beta}}{
\tseq{\Gamma, \alpha}{t}{\beta}}
$
\end{center}
\vspace{\parsep}
\end{minipage}
}
\caption{Typing system for the simply-typed $\lambda$-calculus}\label{tsstlc}
\end{figure}

When a typing judgement, $\tseq{\Gamma}{t}{\alpha}$, is derivable according to the system of Figure~\ref{tsstlc},
we write $\der{\Sst} \tseq{\Gamma}{t}{\alpha}$.

Note that in Rule~(\textsc{var}), the type assigned to index 0 is the rightmost type occurring in the
environment.
Note also that Rule~(\textsc{weak}) has the effect of incrementing the index.
This exemplifies the fact that an index that stands for a free variable corresponds to a position in the
environment (starting from zero and numbering the positions from right to left).

We now turn to the question of what it means to be linear for a lambda term in de Bruijn notation.
Using the usual notation, with named variables, a $\lambda$-term $t$ is linear if and only if:
\begin{itemize}
\item every subterm of $t$ of the form $\lambda x.\,u$ is such that $x$ has exactly one free occurrence in $u$;
\item every variable has at most one free occurrence in $t$.
\end{itemize}
This corresponds to the $\lambda$-terms that can be typed in the implicative
fragment of linear logic~\cite{Benton:TLCA93}.

In order to adapt the above definition to the case of $\lambda$-terms in de Bruijn notation, we must first
define the notion of a free occurrence of a de Bruijn index in a $\lambda$-term.
More precisely, what we define is the number of free occurrences of a given index $i$ in a given
$\lambda$-term $t$.

\begin{definition}
Let $i\in\mathbb{N}$ and $t\in\ltm$. The number of free occurrences of index $i$ in $t$, in notation $\occ{i}{t}$, is
inductively defined as follows:
\begin{enumerate}[label={\rm \roman*.}, labelsep*=2ex, itemsep=0pt]
\item $\occ{i}{c} = 0$,\, for $c\in\cst$
\item $\occ{i}{i} = 1$
\item $\occ{i}{j} = 0$,\, for $j\in\mathbb{N}$ and $i\not=j$
\item $\occ{i}{\iapp{t_1}{t_2}} = \occ{i}{t_1} + \occ{i}{t_2}$
\item $\occ{i}{\iabs{t_1}} = \occ{i+1}{t_1}$
\end{enumerate}
\end{definition}

We then define the set of free variables of a term to be the set of indices whose numbers of occurrences are
nonzero.
\begin{definition}
Let $t\in\ltm$. The set of free variables of $t$, in notation $\fv{t}$, is defined as follows:
\begin{align*}
\fv{t} = \{i\in\mathbb{N}\;|\;\occ{i}{t} \not= 0\}
\end{align*}
\end{definition}

We are now in a position of defining what linearity signifies in the case of a $\lambda$-term in de Bruijn
notation.

\begin{definition}\label{lltdef}
Let $t\in\ltm$. $t$ is said to be a \emph{linear $\lambda$-term} if and only if the three following conditions hold:
\begin{enumerate}[label={\rm \roman*.}, ref={\rm \roman*}, labelsep*=2ex, itemsep=0pt]
\item every subterm of $t$ of the form $\iabs{u}$ is such that $\occ{0}{u} = 1$;\label{C1}
\item for every $i\in\fv{t}$, $\occ{i}{t} = 1$;\label{C2}
\item for every $i, j \in \mathbb{N}$, if $i\in\fv{t}$ and $j<i$, then $j \in \fv{t}$.\label{C3}
\end{enumerate}
\end{definition}

The intuition behind the third condition in the above definition can be paraphrased as follows:
\textit{every variable declared in the environment must occur in the lambda term.}
This intuition may seem a little bit subtle because the notion of a linear $\lambda$-term would not only depend
on the term itself but also on a possible declaration environment.

To illustrate it, consider the following two typing judgements (expressed using the usual notation with named
variables):
\begin{align}
&\tseq{y:\alpha\rightarrow\beta}{\lambda x.\,y\,x}{\alpha\rightarrow\beta}\label{tj1}\\
&\tseq{y:\alpha\rightarrow\beta, z:\gamma}{\lambda x.\,y\,x}{\alpha\rightarrow\beta}\label{tj2}
\end{align}
The first typing judgement, (\ref{tj1}), is a valid judgement of the typed linear $\lambda$-calculus.
Therefore, according to this typing judgement, $\lambda x.\,y\,x$ is considered to be a linear $\lambda$-term.
By contrast, typing judgement~(\ref{tj2}) is not a valid judgement of the typed linear $\lambda$-calculus because
the variable $z$, which is declared in the environment, does not occur in the typed $\lambda$-term.
Consequently, according to this second typing judgement, $\lambda x.\,y\,x$ is not considered to be a linear
$\lambda$-term.
Now, if one expresses judgements (\ref{tj1}) and (\ref{tj2}) using de Bruijn notation, the two involved
$\lambda$-terms are no longer syntactically equal:
\begin{align}
&\tseq{\alpha\rightarrow\beta}{\iabs{\iapp{1}{0}}}{\alpha\rightarrow\beta}\label{tjdb1}\\
&\tseq{\alpha\rightarrow\beta, \gamma}{\iabs{\iapp{2}{0}}}{\alpha\rightarrow\beta}\label{tjdb2}
\end{align}
We then invite the reader to check that $\iabs{\iapp{1}{0}}$ obeys Definition~\ref{lltdef} and is therefore a
linear $\lambda$-term, while $\iabs{\iapp{2}{0}}$ is not because it violates condition \ref{C3}
of Definition~\ref{lltdef}.

In the sequel, $\lambda$-terms that only obey conditions \ref{C1} and \ref{C2} of Definition~\ref{lltdef} will
be called \emph{quasi-linear $\lambda$-terms}.

\section{Fragmentary typing environments}

The distinction between linear and quasi-linear $\lambda$-terms highlights one of the technical problems one
faces when trying to work with linear $\lambda$-terms in de Bruijn notation: in general, the subterms of a
linear $\lambda$-term are not themselves linear; they are only quasi-linear.
As an illustration, consider the following typing judgement (in usual notation):
\begin{align}\label{ex2}
\tseq{}{\lambda x y z.\,y\,(x\,z)}{(a\rightarrow b)\rightarrow((b\rightarrow c)\rightarrow(a\rightarrow c))}
\end{align}
Using de Bruijn notation, typing judgement (\ref{ex2}) is as follows:
\begin{align}\label{ex2dbn}
  \tseq{}
    {\iabs{\iabs{\iabs{\iapp{1}{\iapp{2}{0}}}}}}
    {(a\rightarrow b)\rightarrow((b\rightarrow c)\rightarrow(a\rightarrow c))}
\end{align}
The body of term~(\ref{ex2dbn}), namely $\iapp{1}{\iapp{2}{0}}$, is made of two subterms, $1$ and $\iapp{2}{0}$,
that are both quasi-linear but not linear.
Therefore, a syntax-oriented typing system for linear $\lambda$-terms must accommodate quasi-linear
$\lambda$-terms.
Now, let us focus on the rule that allows the body of term~(\ref{ex2dbn}) to be typed:
\begin{align}\label{exapp}
  \infer[\mbox{(\textsc{app})}]{
    \tseq{a\rightarrow b, b\rightarrow c, a}{\iapp{1}{\iapp{2}{0}}}{c}
  }{
    \tseq{a\rightarrow b, b\rightarrow c, a}{1}{b\rightarrow c}
    &
    \tseq{a\rightarrow b, b\rightarrow c, a}{\iapp{2}{0}}{b}
  }
\end{align}
In order to derive the first premise of rule~(\ref{exapp}), the only needed typing information is the type that
appears at position 1 in the typing environment. Similarly, in order to derive the second premise, one only
needs the types appearing at positions 0 and 2. The idea of a fragmentary environment is the one of a typing
environment from which the irrelevant typing information has been discarded. Applying this idea to our current
example, using $\bt$ as a placeholder for the discarded irrelevant typing information, we obtain the following
typing rule:
\begin{align}\label{exappf}
  \infer[\mbox{(\textsc{app})}]{
    \tseq{a\rightarrow b, b\rightarrow c, a}{\iapp{1}{\iapp{2}{0}}}{c}
  }{
    \tseq{\bt, b\rightarrow c, \bt}{1}{b\rightarrow c}
    &
    \tseq{a\rightarrow b, \bt, a}{\iapp{2}{0}}{b}
  }
\end{align}

In order to put the idea illustrated by rule~(\ref{exappf}) to work, we need to formalize the notion of a
fragmentary environment and to define the operation of merging two fragmentary environments (an operation that
we will define as an addition).

\begin{definition}
Let $\bt$ be a fresh symbol (i.e., a symbol such that $\bt\not\in\aty$).
The set of quasi-types, $\qty$, is defined to be $\ty \cup \{\bt\}$. 
\end{definition}

We let lowercase Greek letters from the end of the alphabet ($\varphi, \psi, \omega$) range over quasi-types.

\begin{definition}
A fragmentary environment is defined to be a finite sequence of quasi-types.
Accordingly, the set of fragmentary environments, $\fenv$, is defined as follows:
\begin{align*}
\fenv = \bigcup_{n\in\mathbb{N}} \qty^n
\end{align*}
\end{definition}

We adopt the same notational conventions for fragmentary environments as for typing environments.

We provide the set of quasi-types with two partial operations: addition and subtraction.
These operations will then be lifted at the level of the fragmentary environments and will be used for their
merging.

\begin{definition}
Let $\omega \in \qty$. The addition of quasi-types is the smallest partial operation that satisfies the
following two equations:
\begin{align*}
  \bt + \omega = \omega \quad\quad
  \omega + \bt = \omega
\end{align*}
\end{definition}

\begin{definition}
Let $\omega \in \qty$. The subtraction of quasi-types is the smallest partial operation that satisfies the
following two equations:
\begin{align*}
  \omega - \bt = \omega \quad\quad
  \omega - \omega = \bt
\end{align*}
\end{definition}

We also provide the set of quasi-types with a flat partial order.

\begin{definition}
Let $\psi, \omega \in \qty$. $\psi\sqsubseteq\omega$ if and only if $\psi=\bot$ or $\psi=\omega$.
\end{definition}

We now lift the addition and subtraction operation, as well as the order relation, to the level of fragmentary
environments. This is simply done componentwisely.

\begin{definition}
Let $\Gamma, \Delta \in \fenv$.
The addition of fragmentary environments is the smallest partial operation such that if $\Gamma + \Delta$ is defined then
\begin{enumerate}[label={\rm \roman*.}, labelsep*=2ex, itemsep=0pt]
  \item $|\Gamma + \Delta| = |\Gamma| = |\Delta|$;
  \item for all $0\leq i < |\Gamma|$, $(\Gamma + \Delta)(i) = \Gamma(i) + \Delta(i)$.
\end{enumerate}
\end{definition}

\begin{definition}
Let $\Gamma, \Delta \in \fenv$.
The subtraction of fragmentary environments is the smallest partial operation such that if $\Gamma - \Delta$ is defined then
\begin{enumerate}[label={\rm \roman*.}, labelsep*=2ex, itemsep=0pt]
  \item $|\Gamma - \Delta| = |\Gamma| = |\Delta|$;
  \item for all $0\leq i < |\Gamma|$, $(\Gamma - \Delta)(i) = \Gamma(i) - \Delta(i)$.
\end{enumerate}
\end{definition}

\begin{definition}
Let $\Gamma, \Delta \in \fenv$. $\Gamma\sqsubseteq\Delta$ if and only if
$|\Gamma| = |\Delta|$ and for all $0\leq i < |\Gamma|$, $\Gamma(i) \sqsubseteq \Delta(i)$.
\end{definition}

We say that two fragmentary environments, $\Gamma$ and $\Delta$, are \emph{disjoint}
(in notation, $\Gamma\bowtie\Delta$) if and only if $\Gamma+\Delta$ is defined.
The minimal elements of $\fenv$ are fragmentary environments whose every component is $\bt$, i.e.,
environments of the form $\bt,\bt,\ldots,\bt$. We write $\mathbf{\minenv}$ for these minimal environments.

We now adapt the typing system of Figure~\ref{tsstlc} in order to allow only quasi-linear terms to be typed.
To this aim, we exploit the idea we presented at the beginning of this section, i.e., discarding the irrelevant
typing information from the environments. The resulting system is given in Figure~\ref{tsfllc}.

\begin{figure}[ht]
\fbox{%
\begin{minipage}{.975\textwidth}
\mbox{}
\vspace{\parsep}
\begin{center}
\raisebox{1.5ex}{$%
\tseq{\minenv}{c}{\tau(c)}\;\;\mbox{(\textsc{const})}
\quad\quad
\tseq{\minenv, \alpha}{0}{\alpha}\;\;\mbox{(\textsc{var})}
$}
\quad\quad
$
\infer[\mbox{(\textsc{weak})}]{
\tseq{\Gamma, \bt}{\mathrm{i}+1}{\alpha }}{
\tseq{\Gamma}{\mathrm{i}}{\alpha }}
$
\end{center}
\begin{center}
$
\infer[\mbox{(\textsc{app})}]{
\tseq{\Gamma+\Delta}{\iapp{t}{u}}{\beta}}{
\tseq{\Gamma}{t}{\alpha \rightarrow \beta}
&
\tseq{\Delta}{u}{\alpha}}
\quad\quad
\infer[\mbox{(\textsc{abs})}]{
\tseq{\Gamma}{\iabs{t}}{\alpha \rightarrow \beta}}{
\tseq{\Gamma, \alpha}{t}{\beta}}
$
\end{center}

\quad
\textsf{\textbf{Proviso}}: in Rule (\textsc{app}), $\Gamma \bowtie \Delta$.

\vspace{\parsep}
\mbox{}
\end{minipage}
}
\caption{Typing system with fragmentary environments for the linear $\lambda$-calculus}
\label{tsfllc}
\end{figure}

When a typing judgement, $\tseq{\Gamma}{t}{\alpha}$, is derivable according to the typing system
of Figure~\ref{tsfllc}, we write $\der{\Sfll} \tseq{\Gamma}{t}{\alpha}$.
The next lemma, which can be established by a straightforward induction, sheds some light on the rationale
for this typing system.

\begin{lemma}
Let $\Gamma\in\fenv$, $t\in\ltm$, and $\alpha\in\ty$ be such that $\der{\Sfll} \tseq{\Gamma}{t}{\alpha}$.
For every $i$ such that $0\leq i <|\Gamma|$, if $\Gamma(i)\in\ty$ then $\occ{i}{t} = 1$, otherwise
$\Gamma(i)=\bot$ and $\occ{i}{t} = 0$.
\end{lemma}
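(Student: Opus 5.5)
We argue by induction on the derivation of $\der{\Sfll} \tseq{\Gamma}{t}{\alpha}$, distinguishing cases according to the last rule applied. Throughout, we fix $i$ with $0\leq i<|\Gamma|$ and show that $\Gamma(i)\in\ty$ implies $\occ{i}{t}=1$, while $\Gamma(i)=\bt$ implies $\occ{i}{t}=0$. Since every quasi-type is either a simple type or $\bt$, these two implications are exactly the statement.

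For the axioms the verification is immediate. For (\textsc{const}), every component of $\minenv$ is $\bt$, and $\occ{i}{c}=0$ for every $i$. For (\textsc{var}), the environment is $\minenv,\alpha$. Component $0$ is $\alpha\in\ty$, and $\occ{0}{0}=1$. Every component $i>0$ is $\bt$, and $\occ{i}{0}=0$ because $i\neq 0$.

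For (\textsc{weak}), the conclusion is $\tseq{\Gamma,\bt}{\mathrm{i}+1}{\alpha}$ and the premise is $\tseq{\Gamma}{\mathrm{i}}{\alpha}$.
\begin{itemize}
\item Component $0$ of the conclusion is $\bt$, and $\occ{0}{\mathrm{i}+1}=0$.
\item For $k\geq 1$, component $k$ of $\Gamma,\bt$ is $\Gamma(k-1)$. Moreover, $\occ{k}{\mathrm{i}+1}=\occ{k-1}{\mathrm{i}}$, since both equal $1$ exactly when $k-1=\mathrm{i}$. The induction hypothesis on the premise then gives the claim.
\end{itemize}

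For (\textsc{abs}), the premise is $\tseq{\Gamma,\alpha}{t}{\beta}$. Position $k+1$ of $\Gamma,\alpha$ holds $\Gamma(k)$. By definition $\occ{k}{\iabs{t}}=\occ{k+1}{t}$, so the induction hypothesis applied at position $k+1$ of the premise yields the claim for position $k$ of the conclusion.

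The only case that needs the algebra of fragmentary environments is (\textsc{app}), and this is where the side condition $\Gamma\bowtie\Delta$ does its work. Here $\occ{k}{\iapp{t}{u}}=\occ{k}{t}+\occ{k}{u}$ and $|\Gamma|=|\Delta|=|\Gamma+\Delta|$. Since $\Gamma(k)+\Delta(k)$ is defined, the definition of addition of quasi-types leaves only three possibilities.
\begin{itemize}
\item If both components are $\bt$, then the sum is $\bt$, and the induction hypothesis gives $0+0=0$ occurrences.
\item If $\Gamma(k)=\varphi\in\ty$ and $\Delta(k)=\bt$, then the sum is $\varphi\in\ty$, and the induction hypothesis gives $1+0=1$ occurrence.
\item The symmetric case is handled in the same way.
\end{itemize}
Crucially, the case where both components are simple types cannot occur, because addition is undefined there. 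This is precisely what prevents $\occ{k}{\iapp{t}{u}}=2$, and it completes the induction.
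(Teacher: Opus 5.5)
Your proof is correct. It is the ``straightforward induction'' on the typing derivation that the paper invokes without spelling out, and you rightly identify the disjointness proviso of Rule~(\textsc{app}) as what rules out two occurrences of the same index.
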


Using this lemma, one can establish the following two propositions, which state that all and only the
quasi-linear simply-typed lambda-terms can be typed according to the typing system of Figure~\ref{tsfllc}.

\begin{proposition}\label{fst2fll}
Let $\Gamma\in\env$, $t\in\ltm$, and $\alpha\in\ty$ be such that $t$ is quasi-linear and
$\der{\Sst} \tseq{\Gamma}{t}{\alpha}$.
Then, there exists $\Delta\in\fenv$ such that $\Delta\sqsubseteq\Gamma$ and
$\der{\Sfll} \tseq{\Delta}{t}{\alpha}$.
\end{proposition}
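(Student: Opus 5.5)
We proceed by induction on the derivation of $\der{\Sst} \tseq{\Gamma}{t}{\alpha}$, equivalently on the structure of $t$. First we note two facts. Quasi-linearity is inherited by the relevant subterms: if $\iapp{t_1}{t_2}$ is quasi-linear then so are $t_1$ and $t_2$, and if $\iabs{t_1}$ is quasi-linear then so is $t_1$. For the abstraction case this uses condition~\ref{C1} for index $0$ and condition~\ref{C2} for the shifted indices $i+1$, since $\occ{i+1}{t_1}=\occ{i}{\iabs{t_1}}$. In the case of an index $\mathrm{i}$ typed through (\textsc{weak}), the immediate subterm is the index $\mathrm{i}-1$, which is trivially quasi-linear.

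We then construct $\Delta$ case by case.

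\textbf{(\textsc{const}).} We take $\Delta=\minenv$ of length $|\Gamma|$. Clearly $\minenv\sqsubseteq\Gamma$.

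\textbf{(\textsc{var}).} Here $\Gamma=\Gamma',\alpha$, and we take $\Delta=\minenv,\alpha$, with $\minenv$ of length $|\Gamma'|$.

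\textbf{(\textsc{weak}).} The induction hypothesis gives $\Delta'\sqsubseteq\Gamma'$ typing $\mathrm{i}$. We apply (\textsc{weak}) of Figure~\ref{tsfllc} to obtain $\Delta',\bt$, and we have $\Delta',\bt\sqsubseteq\Gamma',\beta$ since $\bt\sqsubseteq\beta$.

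\textbf{(\textsc{abs}).} The induction hypothesis applied to $\tseq{\Gamma,\alpha}{t_1}{\beta}$ gives some $\Delta''\sqsubseteq\Gamma,\alpha$ with $\der{\Sfll}\tseq{\Delta''}{t_1}{\beta}$. We need the last component of $\Delta''$ to be exactly $\alpha$, not $\bt$. The preceding Lemma settles this: since $\occ{0}{t_1}=1$ by condition~\ref{C1}, $\Delta''(0)\ne\bot$, and then $\Delta''(0)\sqsubseteq\alpha$ forces $\Delta''(0)=\alpha$. So $\Delta''=\Delta,\alpha$ with $\Delta\sqsubseteq\Gamma$, and (\textsc{abs}) applies.

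\textbf{(\textsc{app}).} This is the main step. The induction hypothesis yields $\Delta_1,\Delta_2\sqsubseteq\Gamma$ typing $t_1$ and $t_2$ respectively. We must check the proviso $\Delta_1\bowtie\Delta_2$ and that $\Delta_1+\Delta_2\sqsubseteq\Gamma$.

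For disjointness, fix a position $i$. If both $\Delta_1(i)$ and $\Delta_2(i)$ were types, the Lemma would give $\occ{i}{t_1}=\occ{i}{t_2}=1$. Then $\occ{i}{\iapp{t_1}{t_2}}=2$, contradicting condition~\ref{C2}. Hence at least one of the two components is $\bt$, so the sum is defined componentwise and equals the other component. The lengths agree because both environments have length $|\Gamma|$.

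For the order, each component of the sum is either $\bt$ or equal to some $\Delta_k(i)\sqsubseteq\Gamma(i)$, so $\Delta_1+\Delta_2\sqsubseteq\Gamma$ follows immediately.

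The only real obstacle is thus the use of the Lemma, in the application case to obtain disjointness and in the abstraction case to recover the bound type. The remaining verifications are routine bookkeeping on lengths and on the flat order.
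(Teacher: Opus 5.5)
Your proof is correct and follows the route the paper indicates: induction on the $\der{\Sst}$ derivation, using the occurrence lemma for $\der{\Sfll}$ in two places. In the (\textsc{abs}) case it forces the last component of the environment to be $\alpha$, and in the (\textsc{app}) case it turns the second condition of Definition~\ref{lltdef} into the proviso $\Delta_1 \bowtie \Delta_2$; the paper only sketches this argument, and your case analysis fills it in faithfully.
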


\begin{proposition}\label{fll2fst}
Let $\Gamma\in\fenv$, $t\in\ltm$, and $\alpha\in\ty$ be such that $\der{\Sfll} \tseq{\Gamma}{t}{\alpha}$.
Then, $t$ is quasi-linear and there exists $\Delta\in\env$ such that $\Gamma\sqsubseteq\Delta$ and
$\der{\Sst} \tseq{\Delta}{t}{\alpha}$.
\end{proposition}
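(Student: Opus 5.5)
We prove Proposition~\ref{fll2fst} by induction on the derivation of $\der{\Sfll} \tseq{\Gamma}{t}{\alpha}$. The statement has two parts, quasi-linearity of $t$ and the existence of $\Delta\in\env$ with $\Gamma\sqsubseteq\Delta$ and $\der{\Sst}\tseq{\Delta}{t}{\alpha}$, and we establish both simultaneously. For the second part we use a completion construction. Given $\Gamma\in\fenv$, we replace every occurrence of $\bt$ in $\Gamma$ by an arbitrary simple type, which exists whenever $\aty$ is nonempty (as is implicit in the paper). Any such total environment $\Delta$ satisfies $\Gamma\sqsubseteq\Delta$.

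The base cases and the unary rules are direct. For (\textsc{const}), $t=c$ is trivially quasi-linear, and any completion $\Delta$ of $\minenv$ types $c$ by (\textsc{const}) of \Sst. For (\textsc{var}), $0$ is quasi-linear, and any completion $\Delta',\alpha$ of $\minenv,\alpha$ types $0$ by (\textsc{var}). For (\textsc{weak}), the induction hypothesis yields $\Delta\sqsupseteq\Gamma$ with $\der{\Sst}\tseq{\Delta}{\mathrm{i}}{\alpha}$; choosing any $\beta$, we get $\Gamma,\bt\sqsubseteq\Delta,\beta$, and $\mathrm{i}+1$ is quasi-linear. For (\textsc{abs}), the induction hypothesis gives $\Delta'\sqsupseteq\Gamma,\alpha$. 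Since $\alpha\in\ty$ and the order on $\qty$ is flat, $\Delta'=\Delta,\alpha$ with $\Gamma\sqsubseteq\Delta$, and (\textsc{abs}) of \Sst{} applies.

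Quasi-linearity of $\iabs{t}$ comes from the premise as follows. Condition~\ref{C1} for the new subterm $\iabs{t}$ itself follows from the first lemma of this section (the occurrence lemma): since $(\Gamma,\alpha)(0)=\alpha\in\ty$, we have $\occ{0}{t}=1$. The other subterms inherit condition~\ref{C1} from $t$. Condition~\ref{C2} follows because $\occ{i}{\iabs{t}}=\occ{i+1}{t}$.

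The main step is (\textsc{app}). For typing, the induction hypothesis gives two possibly different completions $\Delta_1\sqsupseteq\Gamma$ and $\Delta_2\sqsupseteq\Delta$, whereas \Sst{} requires a common environment. We fix this by choosing one completion $\Theta$ of $\Gamma+\Delta$. Because $\Gamma\sqsubseteq\Gamma+\Delta\sqsubseteq\Theta$, and likewise for $\Delta$, it suffices to strengthen the induction hypothesis to the following form: for \emph{every} $\Delta'\in\env$ with $\Gamma\sqsubseteq\Delta'$, we have $\der{\Sst}\tseq{\Delta'}{t}{\alpha}$. The cases above all go through for arbitrary completions, so the strengthened statement holds and (\textsc{app}) of \Sst{} applies with $\Theta$.

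For quasi-linearity of $\iapp{t}{u}$, condition~\ref{C1} is inherited from $t$ and $u$. For condition~\ref{C2}, take an index $i$.
\begin{itemize}
\item If $i<|\Gamma|$, the proviso $\Gamma\bowtie\Delta$ means at most one of $\Gamma(i)$ and $\Delta(i)$ is a type. By the occurrence lemma, $\occ{i}{t}+\occ{i}{u}\le 1$.
\item If $i\geq|\Gamma|$, we need a small auxiliary fact: $\fv{t}\subseteq\{0,\ldots,|\Gamma|-1\}$ whenever $\der{\Sfll}\tseq{\Gamma}{t}{\alpha}$. This is proved by an easy induction alongside the occurrence lemma. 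With it, such $i$ occurs in neither term.
\end{itemize}
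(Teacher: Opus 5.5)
Your proof is correct and follows the route the paper indicates. Quasi-linearity comes from the occurrence lemma, with your auxiliary bound $\fv{t}\subseteq\{0,\ldots,|\Gamma|-1\}$ covering the indices the lemma says nothing about, and the typing part is an induction strengthened to every $\Delta'\in\env$ with $\Gamma\sqsubseteq\Delta'$, which is exactly the strengthening the paper uses for Proposition~\ref{ll2st}. Your caveat on $\aty$ is unnecessary: a completion of $\Gamma$ always exists, because the hypothesis $\alpha\in\ty$ already makes $\ty$ nonempty.
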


We conclude this section by stating a few technical properties that will prove useful later on.
These properties, which concern the quasi types, the fragmentary environments, and the operations of addition
and subtraction, take the form of equalities.
We first state them for the quasi-types.

\begin{lemma}\label{lemma:qtSimpleEq}
Let $\varphi, \psi, \omega \in \qty$. Then, the following equalities hold, if all operations are defined:
\begin{enumerate}[label={\rm (\alph*)}, ref={\rm \alph*}, labelsep*=2ex, itemsep=0pt]
  \item $\varphi + (\psi - \omega) = \psi - (\omega - \varphi)$ \label{qtSimpleEq:eq1}
  \item $\varphi + (\psi - \varphi) = \psi$ \label{qtSimpleEq:eq2}
  \item $\varphi - (\varphi - \psi) = \psi$ \label{qtSimpleEq:eq3}
\end{enumerate}
\begin{proof}
We prove (\ref{qtSimpleEq:eq1}), the proofs of the other equalities being similar.
Either $\varphi = \bot$ or $\varphi\in\ty$. In the first case, both sides of the equation simplify to
$\psi - \omega$. In the second case, since $\omega - \varphi$ is defined, we have that $\omega = \varphi$.
Then, since $\psi - \omega$ is defined, we have that $\psi = \omega$. Hence, the left-hand side of the equation
simplifies to $\varphi$, and the right-hand side to $\psi$, which establishes the equality because
$\varphi=\omega$ and $\omega=\psi$.
\end{proof}
\end{lemma}

Since the addition and subtraction of fragmentary environments are defined componentwise, we obtain as a
corollary that the same equalities hold for fragmentary environments.

\begin{lemma}\label{lemma:envSimpleEq}
Let $\Gamma, \Delta, \Theta \in \fenv$. Then, the following equalities hold, if all operations are defined:
\begin{enumerate}[label={\rm (\alph*)}, labelsep*=2ex, itemsep=0pt]
  \item $\Gamma + (\Delta - \Theta) = \Delta - (\Theta - \Gamma)$ \label{envSimpleEq:eq1}
  \item $\Gamma + (\Delta - \Gamma) = \Delta$ \label{envSimpleEq:eq2}
  \item $\Gamma - (\Gamma - \Delta) = \Delta$ \label{envSimpleEq:eq3}
\end{enumerate}
\end{lemma}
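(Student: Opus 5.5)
The plan is to reduce each equality to its quasi-type counterpart in Lemma~\ref{lemma:qtSimpleEq}, using that addition and subtraction of fragmentary environments are defined componentwise. I treat (\ref{envSimpleEq:eq1}) in detail; (\ref{envSimpleEq:eq2}) and (\ref{envSimpleEq:eq3}) follow the same pattern.

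\emph{Lengths.} Assume every operation occurring in the equality is defined. By the definitions of addition and subtraction of fragmentary environments, each defined operation has arguments of equal length, and its result has that same length. For (\ref{envSimpleEq:eq1}), definedness of $\Delta - \Theta$ and of $\Gamma + (\Delta-\Theta)$ gives $|\Gamma| = |\Delta| = |\Theta|$. Hence both sides of the equality have this common length $n$.

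\emph{Components.} Fix $0 \leq i < n$. Since each operation is defined, unfolding the componentwise definitions twice gives
\begin{align*}
(\Gamma + (\Delta - \Theta))(i) &= \Gamma(i) + (\Delta(i) - \Theta(i)),\\
(\Delta - (\Theta - \Gamma))(i) &= \Delta(i) - (\Theta(i) - \Gamma(i)).
\end{align*}
Every quasi-type operation here is defined, because it is a component of a defined environment operation. So the hypothesis of Lemma~\ref{lemma:qtSimpleEq}(\ref{qtSimpleEq:eq1}) holds with $\varphi = \Gamma(i)$, $\psi = \Delta(i)$ and $\omega = \Theta(i)$, and that lemma gives equality of the two components. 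Two sequences of the same length that agree at every component are equal, which proves (\ref{envSimpleEq:eq1}).

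\emph{Main subtlety.} The only delicate point is the reading of ``if all operations are defined.'' Each operation on one side is defined by hypothesis, and its definedness is exactly definedness at every component together with equality of lengths. This is what allows Lemma~\ref{lemma:qtSimpleEq} to be applied pointwise.

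In (\ref{envSimpleEq:eq2}) and (\ref{envSimpleEq:eq3}), only one side contains operations; the other side is just $\Delta$. There the length argument gives $|\Gamma| = |\Delta|$ directly, and the conclusion follows from parts (\ref{qtSimpleEq:eq2}) and (\ref{qtSimpleEq:eq3}) of Lemma~\ref{lemma:qtSimpleEq} in the same way.
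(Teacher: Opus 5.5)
Your proof is correct and follows the paper's approach: the paper obtains this lemma as an immediate corollary of Lemma~\ref{lemma:qtSimpleEq}, since addition and subtraction of fragmentary environments are defined componentwise. Your explicit handling of lengths and of componentwise definedness only spells out what the paper leaves implicit.
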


\section{A type system for the linear lambda-calculus}\label{Sec4}

With respect to our intended objective, i.e., providing a typing system for the linear
$\lambda$-calculus that properly accommodates $\lambda$-terms in de Bruijn notation,
the system of Figure~\ref{tsfllc} has still some defects. A first weakness is that the linearity of the typed
$\lambda$-terms is not guaranteed
by the typing system \textit{per se}, but by an external disjointness condition (the proviso of
Rule~ (\textsc{app})). Another weakness is that the system does not fit well with a backward-chaining
interpretation (which is useful for solving type inhabitance questions). The problem again comes from
Rule~ (\textsc{app}): given a fragmentary context $\Theta$, there is an exponential number of ways of splitting
it into two disjoint fragmentary environments $\Gamma$ and $\Delta$ such that $\Gamma+\Delta=\Theta$.

A possible remedy for the above shortcomings is to consider another kind of typing judgment:
\begin{align}\label{nktj}
	\tytm{\Gamma}{t}{\alpha}{\Delta} \text{\hspace{1cm}where $\Gamma,\Delta\in\fenv$, $t\in\ltm$, and $\alpha\in\ty$}
\end{align}

Typing judgement such as (\ref{nktj}) have been introduced by Hodas and Miller in the context of linear logic
programming~\cite{Hodas-Miller}.
Intuitively they may be interpreted as follows: given the typing resources of $\Gamma$, term $t$ can be assigned
type $\alpha$, and the typing resources that have not been consumed by this type assignment are kept in
$\Delta$.

A typing system based on judgements akin to (\ref{nktj}) is given in Figure~\ref{tsllc}.
When a typing judgement, $\tytm{\Gamma}{t}{\alpha}{\Delta}$, is derivable according to this system,
we write $\der{\Sll} \tytm{\Gamma}{t}{\alpha}{\Delta}$.

\begin{figure}[ht]
\fbox{%
\begin{minipage}{.975\textwidth}
\mbox{}
\vspace{\parsep}
\begin{center}
$
\tytm{\Gamma}{c}{\tau(c)}{\Gamma }\;\;\mbox{(\textsc{const})}
\quad\quad
\tytm{\Gamma, \alpha}{0}{\alpha}{\Gamma, \bt}\;\;\mbox{(\textsc{var})}
$
\end{center}
\begin{center}
$
\infer[\mbox{(\textsc{weak$_1$})}]{
\tytm{\Gamma, \beta}{\mathrm{i}+1}{\alpha }{\Delta, \beta}}{
\tytm{\Gamma}{\mathrm{i}}{\alpha }{\Delta}}
\quad\quad
\infer[\mbox{(\textsc{weak$_2$})}]{
\tytm{\Gamma, \bt}{\mathrm{i}+1}{\alpha }{\Delta, \bt}}{
\tytm{\Gamma}{\mathrm{i}}{\alpha }{\Delta}}
$
\end{center}
\begin{center}
$
\infer[\mbox{(\textsc{app})}]{
\tytm{\Gamma}{\iapp{t}{u}}{\beta}{\Theta}}{
\tytm{\Gamma}{t}{\alpha \rightarrow \beta}{\Delta}
&
\tytm{\Delta}{u}{\alpha}{\Theta}}
\quad\quad
\infer[\mbox{(\textsc{abs})}]{
\tytm{\Gamma}{\iabs{t}}{\alpha \rightarrow \beta}{\Delta}}{
\tytm{\Gamma, \alpha}{t}{\beta}{\Delta, \bt}}
$
\end{center}
\vspace{\parsep}
\end{minipage}
}
\caption{Typing system for the linear $\lambda$-calculus}\label{tsllc}
\end{figure}

We now prove a few lemmas and propositions that establish the correctness of the above type system.
The first two lemmas give a technical content to the intuition that in a judgement $\tytm{\Gamma}{t}{\alpha}{\Delta}$,
the fragmentary environment $\Delta$ is made of the declarations that were not used when assigning the type $\alpha$ to the
term $t$.

\begin{lemma}\label{subeq}
Let $\Gamma, \Delta \in \fenv$, $t\in\ltm$, and $\alpha\in\ty$. If $\der{\Sll}\tytm{\Gamma}{t}{\alpha}{\Delta}$
then $\Delta\sqsubseteq\Gamma$. 
\begin{proof}
The proof proceeds by a straightforward induction over the derivation of the typing judgement
$\der{\Sll}\tytm{\Gamma}{t}{\alpha}{\Delta}$.
\end{proof} 
\end{lemma}

\begin{lemma}\label{quasi-linearity-aux}
Let $\Gamma, \Delta\in\fenv$, $t\in\ltm$, and $\alpha\in\ty$ be such that
$\der{\Sll}\tytm{\Gamma}{t}{\alpha}{\Delta}$.
For every $i$ such that $0\leq i <|\Gamma|$, if $\Gamma(i)\in\ty$ and $\Delta(i)=\bt$
then $\occ{i}{t} = 1$, otherwise
$\Gamma(i)=\Delta(i)$ and $\occ{i}{t} = 0$.
\begin{proof}
The proof proceeds by induction over the derivation of the typing judgement $\der{\Sll}\tytm{\Gamma}{t}{\alpha}{\Delta}$.
All the cases are straightforward, except when $t$ is an application.
In this case, there exist $\Theta \in\fenv$, $t_1, t_2\in\ltm$, and
$\beta\in\ty$ such that $t=\iapp{t_1}{t_2}$, $\der{\Sll}\tytm{\Gamma}{t_1}{\beta\rightarrow\alpha}{\Theta}$,
and $\der{\Sll}\tytm{\Theta}{t_2}{\beta}{\Delta}$.
Let $0\leq i <|\Gamma|$, $\Gamma(i)\in\ty$ and $\Delta(i)=\bt$.
By Lemma~\ref{subeq}, we have $\Delta\sqsubseteq\Theta\sqsubseteq\Gamma$.
Consequently, either $\Theta(i)=\bt$ or $\Theta(i)=\Gamma(i)$.
In the first case, by the induction hypothesis, we have that $\occ{i}{t_1} = 1$ and $\occ{i}{t_2} =0$. 
In the second case, we have $\occ{i}{t_1} = 0$ and $\occ{i}{t_2} =1$.
Therefore, in both cases, $\occ{i}{t} = \occ{i}{t_1} + \occ{i}{t_2} = 1$.
Conversely, suppose that $\Gamma(i)\not\in\ty$ or $\Delta(i)\not=\bt$.
By Lemma~\ref{subeq}, we must have that $\Gamma(i)=\Theta(i)=\Delta(i)$.
Hence, by the induction hypothesis, $\occ{i}{t_1} = \occ{i}{t_2} =0$. Therefore, $\occ{i}{t} = 0$.
\end{proof}
\end{lemma}

Using Lemma~\ref{quasi-linearity-aux}, one can establish that the typing system of Figure~\ref{tsllc} allows only
quasi-linear $\lambda$-terms to be typed.

\begin{proposition}\label{quasi-linearity}
Let $\Gamma, \Delta\in\fenv$, $t\in\ltm$, and $\alpha\in\ty$ be such that
$\der{\Sll}\tytm{\Gamma}{t}{\alpha}{\Delta}$. Then, $t$ is quasi-linear.
\begin{proof}
The first condition of Definition~\ref{lltdef} can be established by induction on the derivation of the typing judgement
$\der{\Sll}\tytm{\Gamma}{t}{\alpha}{\Delta}$. When $t$ is a constant or a variable, the condition is vacuously satisfied.
When $t$ is an application, the induction is straightforward. In the case of an abstraction,
there exist $t_1\in\ltm$, and $\alpha_1, \alpha_2\in\ty$ such that
$t=\iabs{t_1}$, $\alpha=\alpha_1\rightarrow\alpha_2$, and $\der{\Sll}\tytm{\Gamma, \alpha_1}{t_1}{\alpha_2}{\Delta, \bt}$.
Then, by the induction hypothesis, $t_1$ satisfies the condition, and so does $\iabs{t_1}$ because $\occ{0}{t_1} = 1$
by Lemma~\ref{quasi-linearity-aux}.

The second condition of Definition~\ref{lltdef} follows from Lemma~\ref{quasi-linearity-aux}.
\end{proof}
\end{proposition}

We now prove that the typing system of Figure~\ref{tsllc} is correct in the sense that if it assigns a type $\alpha$ to a term $t$, then $t$ is indeed a simply typed $\lambda$-term of type $\alpha$.

\begin{proposition}\label{ll2st}
Let $\Gamma, \Delta\in\fenv$, $t\in\ltm$, and $\alpha\in\ty$ be such that $\der{\Sll}\tytm{\Gamma}{t}{\alpha}{\Delta}$.
Then, there exists $\Theta\in\env$ such that $\Gamma\sqsubseteq\Theta$ and
$\der{\Sst} \tseq{\Theta}{t}{\alpha}$.
\begin{proof}
The proof proceeds by induction over the derivation of the typing judgement $\der{\Sll}\tytm{\Gamma}{t}{\alpha}{\Delta}$.
In order to make the induction work, we prove the more general statement that, for every $\Theta\in\env$,
if $\Gamma\sqsubseteq\Theta$, then $\der{\Sst} \tseq{\Theta}{t}{\alpha}$. The induction is straightforward, except in the case
of an application, where Lemma \ref{subeq} is needed to allow the induction hypothesis to be applied to the second premise of
the rule.
\end{proof}
\end{proposition}

We now want to give a converse to Propositions~\ref{quasi-linearity} and~\ref{ll2st}. To this end, we first define what it means
for a fragmentary environment to be compatible with a term.
Let $\Gamma \in \fenv$ be a fragmentary environment, and let $t\in\ltm$ be a $\lambda$-term. We say that $\Gamma$ is compatible
with $t$ if and only if, for all $i\in\fv{t}$, $i<|\Gamma|$ and $\Gamma(i)\not=\bt$.

\begin{proposition}\label{st2ll}
Let $\Gamma\in\env$, $t\in\ltm$, and $\alpha\in\ty$ be such that $\der{\Sst}\tseq{\Gamma}{t}{\alpha}$.
If $t$ is quasi-linear, then there exists $\Delta\in\fenv$ such that
$\der{\Sll}\tytm{\Gamma}{t}{\alpha}{\Delta}$.
\begin{proof}
The proof proceeds by induction over the derivation of the typing judgement $\der{\Sst} \tseq{\Gamma}{t}{\alpha}$.
All the cases are straightforward except for the case of an application for which the induction hypothesis needs to be
strengthened. Accordingly, we prove the following more general statement: for every $\Theta \in \fenv$ compatible with $t$,
if $\Theta\sqsubseteq\Gamma$, then there exists $\Delta\in\fenv$ such that $\der{\Sll}\tytm{\Theta}{t}{\alpha}{\Delta}$.

Let us focus on the non-easy case, that is, when $t$ is an application. In this case, there exist $t_1, t_2\in\ltm$, and
$\beta\in\ty$ such that $t=\iapp{t_1}{t_2}$, $\der{\Sst}\tseq{\Gamma}{t_1}{\beta\rightarrow\alpha}$,
and $\der{\Sst}\tseq{\Gamma}{t_2}{\beta}$. Let $\Theta \in \fenv$ be compatible with $t$ and such that $\Theta\sqsubseteq\Gamma$.
Since $\Theta$ is compatible with $t$, it is a fortiori compatible with $t_1$. Hence, by applying the induction hypothesis on the
first premise, there exists $\Lambda \in \fenv$ such that $\der{\Sll}\tytm{\Theta}{t_1}{\beta\rightarrow\alpha}{\Lambda}$.
Now, in order to apply the induction hypothesis on the second premise with $\Lambda$, we must prove that
$\Lambda\sqsubseteq\Gamma$ and that $\Lambda$ is compatible with $t_2$. The first assertion follows from Lemma~\ref{subeq}.
As for the second one, let $i\in\fv{t_2}$. A fortiori $i\in\fv{t}$, and since $\Theta$ is compatible with $t$,
$\Theta(i) \not=\bt$. Furthermore, $t$ being quasi-linear, $\occ{i}{t}=1$, which implies that $\occ{i}{t_1}=0$
because $i\in\fv{t_2}$. Hence, by Lemma~\ref{quasi-linearity-aux}, $\Lambda(i) = \Theta(i) \not= \bt$.
This shows that $\Lambda$ is compatible with $t_2$. We may then apply the induction hypothesis on the second premise,
which yields that there exists $\Delta \in \fenv$ such that $\der{\Sll}\tytm{\Lambda}{t_2}{\beta}{\Delta}$.
We may then conclude that $\der{\Sll}\tytm{\Theta}{\iapp{t_1}{t_2}}{\alpha}{\Delta}$.
\end{proof}
\end{proposition}

Using the preceding three propositions, we can show that the typing system of Figure~\ref{tsllc} can be used to characterize
exactly the linear simply-typed $\lambda$-terms. To state this property we first need to introduce a notion of minimality
of an environment with respect to a term. Let $\Gamma \in \env$ be an environment, and let $t\in\ltm$ be a $\lambda$-term.
We say that $\Gamma$ is minimal with respect to $t$ if and only if $(|\Gamma|-1)\in\fv{t}$ and for all $i\in\fv{t}$, $i<|\Gamma|$.

\begin{proposition}
Let $\Gamma\in\env$, $t\in\ltm$, and $\alpha\in\ty$.
Then, $\der{\Sll}\tytm{\Gamma}{t}{\alpha}{\minenv}$ if and only if
$t$ is a linear $\lambda$-term, $\Gamma$ is minimal with respect to $t$, and $\der{\Sst} \tseq{\Gamma}{t}{\alpha}$.
\begin{proof}
Suppose that $\der{\Sll}\tytm{\Gamma}{t}{\alpha}{\minenv}$. By proposition \ref{quasi-linearity}, we have that $t$
is quasi-linear. In addition, by Lemma \ref{quasi-linearity-aux}, for every $i$ such that $0\leq i <|\Gamma|$, we have
that $i\in\fv{t}$. This implies that $t$ is linear and that $\Gamma$ is minimal with respect to $t$. Finally,
by proposition~\ref{ll2st}, we have that $\der{\Sst} \tseq{\Gamma}{t}{\alpha}$.

Conversely, suppose that $t$ is a linear $\lambda$-term, that $\Gamma$ is minimal with respect to $t$, and that
$\der{\Sst} \tseq{\Gamma}{t}{\alpha}$.
By proposition~\ref{st2ll}, there exists $\Delta\in\fenv$ such that $\der{\Sll}\tytm{\Gamma}{t}{\alpha}{\Delta}$.
Since $\Gamma$ is minimal with respect to $t$, we have that $(|\Gamma|-1)\in\fv{t}$. Then, since $t$ is linear, for every $i$
such that $0\leq i<|\Gamma|$, $\occ{i}{t} = 1$. This implies, by Lemma~\ref{quasi-linearity-aux}, that $\Delta=\minenv$.
\end{proof}
\end{proposition}

We end this section by stating two technical lemmas that will be needed later on.

\begin{lemma}\label{lemma:envdefined}
Let $\Gamma, \Delta, \Theta \in \fenv$, such that $\Delta \sqsubseteq \Gamma$. Then:
\begin{enumerate}[label={\rm (\alph*)}, labelsep*=2ex, itemsep=0pt]
  \item If\/ $\Gamma + \Theta$ is defined, then $\Delta + \Theta$ is defined. \label{envdefined:add}
  \item If\/ $\Theta - \Gamma$ is defined, then $\Theta - \Delta$ is defined. \label{envdefined:sub}
\end{enumerate}
\begin{proof}
(a)
We proceed componentwisely. Let $i<|\Gamma|$. Either $\Gamma(i)\in\ty$ or $\Gamma(i)=\bt$.
In the first case, since $\Gamma + \Theta$ is defined, we have $\Theta(i)=\bt$.
Consequently, $\Delta(i) + \Theta(i)$ is defined. In the second case, since $\Delta \sqsubseteq \Gamma$, we have that
$\Delta(i) = \bt$, which implies that $\Delta(i) + \Theta(i)$ is defined.
(b) The proof is similar.
\end{proof}
\end{lemma}

\begin{lemma}\label{lemma:envarith}
Let $\Gamma, \Delta, \Theta \in \fenv$, $t \in \ltm$, and $\alpha \in \ty$ such that $\der{\Sll}\tytm{\Gamma}{t}{\alpha}{\Delta}$.
\begin{enumerate}[label={\rm (\alph*)}, labelsep*=2ex, itemsep=0pt]
  \item If\/ $\Gamma + \Theta$ is defined, then $\Delta + \Theta$ is defined and
    $\der{\Sll}\tytm{\Gamma + \Theta}{t}{\alpha}{\Delta + \Theta}$. \label{envarith:add}
  \item If\/ $\Delta - \Theta$ is defined, then $\Gamma - \Theta$ is defined and
    $\der{\Sll}\tytm{\Gamma - \Theta}{t}{\alpha}{\Delta - \Theta}$. \label{envarith:sub}
\end{enumerate}
\begin{proof}
By Lemma~\ref{subeq}, we have $\Delta \sqsubseteq \Gamma$.
Hence, for case \ref{envarith:add}, by Lemma~\ref{lemma:envdefined} \ref{envdefined:add}, $\Delta + \Theta$ is defined.
Similarly, for case \ref{envarith:sub}, by Lemma~\ref{lemma:envdefined} \ref{envdefined:sub}, $\Gamma - \Theta$ is defined.
Then, for both cases, the proof proceeds by a straightforward induction on the derivation of
$\der{\Sll}\tytm{\Gamma}{t}{\alpha}{\Delta}$.
\end{proof}
\end{lemma}

\section{Beta-reduction with de Bruijn indices}

When using the standard syntax with named variables, it is easy to show that linearity is stable by $\beta$-reduction.
Consequently, the notion of $\beta$-reduction for the linear $\lambda$-calculus is the same as usual.
Nevertheless, it is worth reviewing how beta-reduction is implemented using de Bruijn indices because it involves a few
subtleties related to the updating of the indices when a substitution is made within a $\lambda$-abstraction.

To illustrate this updating issue, let us consider the following generalized weakening rule, which is admissible in
the simply typed $\lambda$-calculus:
$$
\infer{
\tseq{\Gamma,\Delta}{t}{\alpha}
}{
\tseq{\Gamma}{t}{\alpha}
}
$$
While this rule is perfectly correct if $t$ is specified using the standard syntax of the $\lambda$-calculus, it is not if $t$
is written in de Bruijn notation. The problem is that the values of the indices that occur free in $t$ must be updated in the
conclusion of the rule.
Indeed, in the premise of the rule, those indices are pointing to positions in the environment $\Gamma$.
Accordingly, in the conclusion of the rule, in order to have them pointing to the same positions in $\Gamma$,
their values must be incremented by the length of $\Delta$.

The operation of updating the free indices of a term $t$ is called \textit{lifting}.
It is denoted by $\liftt{t}{k}{j}$, where $t$ is the term whose free indices must be updated, $k$ is the increment,
and $j$ is the number of nested $\lambda$-abstractions within which $t$ occurs.

\begin{definition}
 Let $t \in \ltm$ and, $i, k \in \mathbb{N}$. The lifting operation $\liftt{t}{k}{i}$ is inductively defined as follows:
\begin{enumerate}[label={\rm \roman*.}, labelsep*=2ex, parsep=0pt, itemsep=3pt plus 1pt minus 1pt]
\item $\liftt{c}{k}{i} = c$,\, for $c\in\cst$
\item $\liftt{j}{k}{i} =
	\begin{cases}
	 j & \text{if $j < i$} \\
	 j + k & \text{if $j \geq i$}
	\end{cases}$
\label{lift:var}
\item $\liftt{\iapp*{t_1}{t_2}}{k}{i} = \iapp{\liftt{t_1}{k}{i}}{\liftt{t_2}{k}{i}}$
\item $\liftt{\iabs*{t_1}}{k}{i} = \iabs{\liftt{t_1}{k}{i + 1}}$
\end{enumerate}
\end{definition}

Using de Bruijn indices instead of named variables, the usual operation of substituting a variable by a term is then specified
by the following definition.

\begin{definition}
Let $t, u \in \ltm$ and $i \in \mathbb{N}$. The substitution of index $i$ by term $u$ in term $t$,
in notation $\sbstt{t}{i}{u}$ is inductively defined as follows:
\begin{enumerate}[label={\rm \roman*.}, labelsep*=2ex, itemsep=0pt]
  \item $\sbstt{c}{i}{u} = c$,\, for $c\in\cst$
  \item $\sbstt{j}{i}{u} =
    \begin{cases}
       j & \text{if $j < i$} \\
       \liftt{u}{i}{0} & \text{if $j = i$} \\
       j - 1 & \text{if $j > i$}
    \end{cases}$
    \label{subst:var}
  \item $\sbstt{\iapp{t_1}{t_2}}{i}{u} = \iapp{\sbstt{t_1}{i}{u}}{\sbstt{t_2}{i}{u}}$
  \item $\sbstt{\iabs{t_1}}{i}{u} = \iabs{\sbstt{t_1}{i+1}{u}}$
    \label{subst:abs}
\end{enumerate}
\end{definition}

The operation of substitution is then used to define the relation of $\beta$-contraction.

\begin{definition} \label{def:beta}
Let $t, u \in \ltm$. The relation of $\beta$-contraction, in notation $t \rightarrow_\beta u$, is the smallest relation that obey
the following rules:
$$
\iapp{\iabs{t}}{u} \rightarrow_{\beta} \sbstt{t}{0}{u}
$$
$$
\infer{
\iapp{t}{v} \rightarrow_{\beta} \iapp{u}{v}
}{
t \rightarrow_{\beta} u
}
\quad\quad
\infer{
\iapp{v}{t} \rightarrow_{\beta} \iapp{v}{u}
}{
t \rightarrow_{\beta} u
}
\quad\quad
\infer{
\iabs{t} \rightarrow_{\beta} \iabs{u}
}{
t \rightarrow_{\beta} u
}
$$
\end{definition}

Finally, the relation of $\beta$-reduction is defined as the reflexive, transitive closure of the relation of $\beta$-contraction.

\section{Subject reduction}

A key property that a typing system is expected to satisfy with respect to a given relation of reduction is the subject
reduction property. This property ensures that typing is stable under evaluation. In the case of the typing system we
have defined in Section \ref{Sec4}, this property is stated as follows:
\begin{quote}
\textit{Let $t\in \ltm$, $\alpha \in \ty$, and $\Gamma, \Delta \in \fenv$ be such that
$\der{\Sll}\tytm{\Gamma}{t}{\alpha}{\Delta}$. If $u\in \ltm$ is such that
$t\rightarrow_\beta u$, then $\der{\Sll}\tytm{\Gamma}{u}{\alpha}{\Delta}$.}
\end{quote}

\noindent
This property, can be proved by induction on the definition of the relation of $\beta$-contraction as specified in
Definition~\ref{def:beta}. To do so, we first need to establish a substitution lemma that corresponds to the base
case of the induction.
To this end, we start by stating and proving a lemma that concerns the substitution of a term for a variable
that does not occur in the term over which the substitution is performed.

\begin{lemma}\label{lemma:l3}
Let $\Gamma_1, \Gamma_2, \Delta_1, \Delta_2 \in \fenv$, $\omega \in \qty$, $t, u \in \ltm$, $\alpha \in \ty$,
and $i \in \mathbb{N}$ be such that
$|\Gamma_2| = |\Delta_2| = i$ and
$\der{\Sll}\tytm{\Gamma_1, \omega, \Gamma_2}{t}{\alpha}{\Delta_1, \omega, \Delta_2}$.
Then $\der{\Sll}\tytm{\Gamma_1, \Gamma_2}{\sbstt{t}{i}{u}}{\alpha}{\Delta_1, \Delta_2}$.
\begin{proof}
The proof proceeds by induction on $t$.
We focus on the case where $t$ is an index, the other cases being straightforward.

Let $t = j$. We have that $\der{\Sll}\tytm{\Gamma_1, \omega, \Gamma_2}{j}{\alpha}{\Delta_1, \omega, \Delta_2}$
is obtained by a derivation consisting of Axiom (\textsc{var}) followed by a chain of $j$ weakening rules ($\textsc{weak}_1$ or
$\textsc{weak}_2$).
We distinguish between three cases, depending on the value of $j$.

\begin{itemize}
\item $j < i$ and $\sbstt{j}{i}{u} = j$.\quad
In this case, the derivation of the typing judgement is as follows:
\begin{align*}
\infer*[\quad\mbox{(\textsc{weak})}]{
\tytm{\Gamma_1, \omega, \Gamma_2}{j}{\alpha}{\Delta_1, \omega, \Delta_2}
}{
\tytm{\Gamma_1, \omega, \Theta,\alpha}{0}{\alpha}{\Delta_1, \omega, \Theta,\bt}\;\; \mbox{(\textsc{var})}
}
\end{align*}
where $\Gamma_1=\Delta_1$, $\Theta,\alpha$ is a prefix of $\Gamma_2$, and $\Theta, \bt$ is a prefix of $\Delta_2$.
Hence, we obtain the expected result from the following derivation:
\begin{align*}
\infer*[\quad\mbox{(\textsc{weak})}]{
\tytm{\Gamma_1, \Gamma_2}{j}{\alpha}{\Delta_1, \Delta_2}
}{
\tytm{\Gamma_1, \Theta,\alpha}{0}{\alpha}{\Delta_1, \Theta,\bt}\;\; \mbox{(\textsc{var})}
}
\end{align*}

\item $j=i$.\quad
This case is not possible because the derivation of the typing judgement would start with
$\tytm{\Gamma_1, \omega}{0}{\alpha}{\Delta_1, \omega}$ which is cannot be a valid instance of Axiom ($\textsc{var}$). 

\item $j > i$ and $\sbstt{j}{i}{u} = j - 1$.\quad
In this case, the derivation of the typing judgement is the following:
\begin{align*}
\infer*[\quad\mbox{(\textsc{weak})}]{
\tytm{\Gamma_1, \omega, \Gamma_2}{j}{\alpha}{\Delta_1, \omega, \Delta_2}
}{
\infer[\mbox{(\textsc{weak})}]{
\tytm{\Gamma_1, \omega}{j-i}{\alpha}{\Delta_1, \omega}
}{
\infer*[\mbox{\quad(\textsc{weak})}]{
\tytm{\Gamma_1}{(j-1)-i}{\alpha}{\Delta_1}
}{
\tytm{\Theta,\alpha}{0}{\alpha}{\Theta,\bt}\;\; \mbox{(\textsc{var})}
}}}
\end{align*}
where $\Theta,\alpha$ is a prefix of $\Gamma_1$, and $\Theta, \bt$ is a prefix of $\Delta_1$.
We then obtain the expected result as follows:
\begin{align*}
\infer*[\quad\mbox{(\textsc{weak})}]{
\tytm{\Gamma_1, \omega, \Gamma_2}{j-1}{\alpha}{\Delta_1, \omega, \Delta_2}
}{
\infer*[\mbox{\quad(\textsc{weak})}]{
\tytm{\Gamma_1}{(j-1)-i}{\alpha}{\Delta_1}
}{
\tytm{\Theta,\alpha}{0}{\alpha}{\Theta,\bt}\;\; \mbox{(\textsc{var})}
}}
\end{align*}
\end{itemize}
\end{proof}
\end{lemma}

The next lemma, which is rather technical, is about the operation of lifting.

\begin{lemma}\label{lemma:philippe}
Let $\Gamma_1, \Gamma_2, \Delta_1, \Delta_2, \Theta \in \fenv$, $t \in \ltm$, $\alpha \in \ty$, and $k, i \in \mathbb{N}$ be
such that $|\Gamma_2| = |\Delta_2| = i$, $|\Theta| = k$, and $\der{\Sll}\tytm{\Gamma_1, \Gamma_2}{t}{\alpha}{\Delta_1, \Delta_2}$.
Then $\der{\Sll}\tytm{\Gamma_1, \Theta, \Gamma_2}{\liftt{t}{k}{i}}{\alpha}{\Delta_1, \Theta, \Delta_2}$.
\begin{proof}
The proof proceeds by induction on $t$.
\begin{itemize}
\item $t = c$.\quad
We have that $\der{\Sll}\tytm{\Gamma_1, \Gamma_2}{c}{\alpha}{\Delta_1, \Delta_2}$. Since $\Gamma_2$ and $\Delta_2$ have the same
length, we have that $\Gamma_1 = \Delta_1$ and $\Gamma_2 = \Delta_2$. Therefore,
$\der{\Sll}\tytm{\Gamma_1, \Theta, \Gamma_2}{c}{\alpha}{\Gamma_1, \Theta, \Gamma_2}$, which yields the expected result since
$\liftt{c}{k}{i} = c$.

\item If $t = j$.\quad
We have that $\der{\Sll}\tytm{\Gamma_1, \Gamma_2}{j}{\alpha}{\Delta_1, \Delta_2}$ and $(\Gamma_1, \Gamma_2)(j)=\alpha$.
We distinguish between two cases, depending on the values of $j$, the \textit{lift} function will expand differently.
\begin{itemize}
\item If $j < i$ and $\liftt{j}{k}{i} = j$.\quad
Since $|\Gamma_2| = i$, we have $(\Gamma_1, \Gamma_2)(j)=\Gamma_2(j) = \alpha$.
This implies that $\der{\Sll}\tytm{\Gamma_1, \Theta, \Gamma_2}{j}{\alpha}{\Delta_1, \Theta, \Delta_2}$ holds, because the
index $j$ corresponds to the same position in both $\{\Gamma_1, \Gamma_2\}$ and $\{\Gamma_1, \Theta, \Gamma_2\}$.
\item $j \ge i$ and $\liftt{j}{k}{i} = j + k$.\quad
Since $|\Gamma_2| = i$, we have $\Gamma_1(j - i) = (\Gamma_1, \Gamma_2)(j) = \alpha$.
Then, because $|\Theta| = k$, $\Gamma_1(j - i) = (\Gamma_1,\theta)(j + k - i) = (\Gamma_1,\theta,\Gamma_2)(j+k)$.
Hence $\der{\Sll}\tytm{\Gamma_1, \Theta, \Gamma_2}{j + k}{\alpha}{\Delta_1, \Theta, \Delta_2}$.
\end{itemize}

\item $t = \iapp{t_1}{t_2}$.\quad
we have $\der{\Sll}\tytm{\Gamma_1, \Gamma_2}{\iapp{t_1}{t_2}}{\alpha}{\Delta_1, \Delta_2}$. This typing judgement is
obtained by a derivation of the following form:
\begin{align*}
\infer[\mbox{(\textsc{app})}]{
\tytm{\Gamma_1, \Gamma_2}{\iapp{t_1}{t_2}}{\alpha}{\Delta_1, \Delta_2}
}{
\infer*{\tytm{\Gamma_1, \Gamma_2}{t_1}{\beta \rightarrow \alpha}{\Xi_1, \Xi_2}}{}
&
\infer*{\tytm{\Xi_1, \Xi_2}{t_2}{\beta}{\Delta_1, \Delta_2}}{}
}
\end{align*}
for some $\Xi_1, \Xi_2 \in \fenv$ and $\beta \in \ty$ such that $|\Xi_2|=i$.
Then, by induction hypothesis, we obtained the following derivation:
\begin{align*}
\infer[\mbox{(\textsc{app})}]{
\tytm{\Gamma_1, \Theta, \Gamma_2}{\iapp{\liftt{t_1}{k}{i}}{\liftt{t_2}{k}{i}}}{\alpha}{\Delta_1, \Theta, \Delta_2}
}{
\infer*{\tytm{\Gamma_1, \Theta, \Gamma_2}{\liftt{t_1}{k}{i}}{\beta \rightarrow \alpha}{\Xi_1, \Theta, \Xi_2}}{}
&
\infer*{\tytm{\Xi_1, \Theta, \Xi_2}{\liftt{t_2}{k}{i}}{\beta}{\Delta_1, \Theta, \Delta_2}}{}
}
\end{align*}
This allows us to conclude because
$\liftt{\iapp{t_1}{t_2}}{k}{i} = \iapp{\liftt{t_1}{k}{i}}{\liftt{t_2}{k}{i}}$.

\item $t = \iabs{t_1}$. Then, there exists $\alpha_1, \alpha_2 \in \ty$ such that $\alpha = (\alpha_1\rightarrow\alpha_2)$ and
$\der{\Sll}\tytm{\Gamma_1, \Gamma_2}{\iabs{t_1}}{\alpha_1\rightarrow\alpha_2}{\Delta_1, \Delta_2}$. This typing judgement is
obtained by the following derivation: 
\begin{align*}
\infer[\mbox{(\textsc{abs})}]{
\tytm{\Gamma_1, \Gamma_2}{\iabs{t_1}}{\alpha_1 \rightarrow \alpha_2}{\Delta_1, \Delta_2}
}{
\infer*{\tytm{\Gamma_1, \Gamma_2, \alpha_1}{t_1}{\alpha_2}{\Delta_1, \Delta_2, \bt}}{}
}
\end{align*}
Then, by induction hypothesis, there exists a derivation of the following form:
\begin{align*}
\infer[\mbox{(\textsc{abs})}]{
\tytm{\Gamma_1, \Theta, \Gamma_2}{\iabs{\liftt{t_1}{k}{i + 1}}}{\alpha_1 \rightarrow \alpha_2}{\Delta_1, \Theta, \Delta_2}
}{
\infer*{\tytm{\Gamma_1, \Theta, \Gamma_2, \alpha_1}{\liftt{t_1}{k}{i + 1}}{\alpha_2}{\Delta_1, \Theta, \Delta_2, \bt}}{}
}
\end{align*}
This yields the expected result since $\liftt{\iabs*{t_1}}{k}{i} = \iabs{\liftt{t_1}{k}{i + 1}}$.
\end{itemize}
\end{proof}
\end{lemma}

We are now in a position to state and prove the substitution lemma.

\begin{lemma} \label{lemma:sr}
Let $\Gamma_1, \Gamma_2, \Delta_1, \Delta_2, \Theta \in \fenv$, $t, u \in \ltm$, $\alpha, \beta \in \ty$,
and $i \in \mathbb{N}$ be such that $|\Gamma_2| = |\Delta_2| = i$,
$\der{\Sll}\tytm{\Gamma_1, \beta, \Gamma_2}{t}{\alpha}{\Delta_1, \bt, \Delta_2}$, and
$\der{\Sll}\tytm{\Delta_1}{u}{\beta}{\Theta}$.
Then $\der{\Sll}\tytm{\Gamma_1, \Gamma_2}{\sbstt{t}{i}{u}}{\alpha}{\Theta, \Delta_2}$.
\begin{proof}
The proof proceeds by induction on $t$.  Note that by Lemma~\ref{quasi-linearity-aux}, $\occ{i}{t} = 1$.
\begin{itemize}
\item $t = c$.\quad This case is impossible because, for any constant $c$, $\occ{i}{c} = 0$.

\item $t = j$.\quad Since $\occ{i}{t} = 1$, we must have $i=j$.  Accordingly, we have $\alpha = \beta$, $\Gamma_2 = \Delta_2$,
and $\sbstt{j}{i}{u} = \liftt{u}{i}{0}$.  Now, by hypothesis, we have $\der{\Sll}\tytm{\Delta_1}{u}{\alpha}{\Theta}$. Hence,
by Lemma~\ref{lemma:philippe}, we obtain $\der{\Sll}\tytm{\Gamma_1, \Gamma_2}{\sbstt{j}{i}{u}}{\alpha}{\Theta, \Delta_2}$.

\item $t = \iapp{t_1}{t_2}$.\quad We distinguish between two cases depending on whether the index $i$ occurs in $t_1$ or $t_2$.
\begin{itemize}
\item $\occ{i}{t_1} = 1$ and $\occ{i}{t_2} = 0$.\quad
In this case, $\der{\Sll}{\tytm{\Gamma_1, \beta, \Gamma_2}{\iapp{t_1}{t_2}}{\alpha}{\Delta_1, \bt, \Delta_2}}$ is
obtained by a derivation whose last rule is of the following form:
\begin{align}\label{der:app1}
\infer[\mbox{(\textsc{app})}]{
\tytm{\Gamma_1, \beta, \Gamma_2}{\iapp{t_1}{t_2}}{\alpha}{\Delta_1, \bt, \Delta_2}
}{
\tytm{\Gamma_1, \beta, \Gamma_2}{t_1}{\gamma \rightarrow \alpha}{\Xi_1, \bt, \Xi_2}
&
\tytm{\Xi_1, \bt, \Xi_2}{t_2}{\gamma}{\Delta_1, \bt, \Delta_2}
}
\end{align}
for some $\gamma\in\ty$ and some $\Xi_1, \Xi_2\in\fenv$ such that $|\Xi_2|=i$.
By Lemma~\ref{subeq}, $\Delta_1\sqsubseteq\Xi_1$. Accordingly, $\Xi_1-\Delta_1$ is defined, and so is
$\Delta_1+(\Xi_1-\Delta_1)$.  Hence, by Lemma~\ref{lemma:envarith}\ref{envarith:add}, we have that
$\der{\Sll}\tytm{\Delta_1+(\Xi_1-\Delta_1)}{u}{\beta}{\Theta+(\Xi_1-\Delta_1)}$.
By Lemma~\ref{lemma:envSimpleEq}\ref{envSimpleEq:eq2}, this can be rewritten as
$\der{\Sll}\tytm{\Xi_1}{u}{\beta}{\Theta+(\Xi_1-\Delta_1)}$.
Then, by applying the induction hypothesis on the first premise of (\ref{der:app1}), we obtain that the following judgement
is derivable:
\begin{align}\label{app1:pr1}
\der{\Sll}\tytm{\Gamma_1, \Gamma_2}{\sbstt{t_1}{i}{u}}{\gamma \rightarrow \alpha}{\Theta + (\Xi_1 - \Delta_1), \Xi_2}
\end{align}
Now, from the second premise of (\ref{der:app1}), by Lemma~\ref{lemma:l3},
$\der{\Sll}{\tytm{\Xi_1, \Xi_2}{\sbstt{t_2}{i}{u}}{\gamma}{\Delta_1, \Delta_2}}$.
From this, by Lemma~\ref{lemma:envarith}\ref{envarith:sub},
$\der{\Sll}{\tytm{\Xi_1-(\Delta_1 - \Theta), \Xi_2}{\sbstt{t_2}{i}{u}}{\gamma}{\Delta_1-(\Delta_1 - \Theta), \Delta_2}}$.
By Lemma~\ref{lemma:envSimpleEq}, \ref{envSimpleEq:eq1} and \ref{envSimpleEq:eq3}, this last judgement may be rewritten as
follows:
\begin{align}\label{app1:pr2}
  \der{\Sll}{\tytm{\Theta + (\Xi_1-\Delta_1), \Xi_2}{\sbstt{t_2}{i}{u}}{\gamma}{\Theta, \Delta_2}}
\end{align}
Finally, since $\sbstt{\iapp{t_1}{t_2}}{i}{u} = \iapp{\sbstt{t_1}{i}{u}}{\sbstt{t_2}{i}{u}}$,
we obtained the expected result from \ref{app1:pr1} and \ref{app1:pr2} by applying Rule (\textsc{app}):
\begin{align*}
\infer{
\tytm{\Gamma_1, \Gamma_2}{\iapp{\sbstt{t_1}{i}{u}}{\sbstt{t_2}{i}{u}}}{\alpha}{\Theta, \Delta_2}
}{
\tytm{\Gamma_1, \Gamma_2}{\sbstt{t_1}{i}{u}}{\gamma \rightarrow \alpha}{\Theta + (\Xi_1 - \Delta_1), \Xi_2}
&
\tytm{\Theta + (\Xi_1-\Delta_1), \Xi_2}{\sbstt{t_2}{i}{u}}{\gamma}{\Theta, \Delta_2}
}
\end{align*}

\item $\occ{i}{t_1} = 0$ and $\occ{i}{t_2} = 1$.\quad
In this second case, $\der{\Sll}{\tytm{\Gamma_1, \beta, \Gamma_2}{\iapp{t_1}{t_2}}{\alpha}{\Delta_1, \bt, \Delta_2}}$ is
obtained by a derivation whose last rule is as follows:
\begin{align}\label{der:app2}
\infer[\mbox{(\textsc{app})}]{
\tytm{\Gamma_1, \beta, \Gamma_2}{\iapp{t_1}{t_2}}{\alpha}{\Delta_1, \bt, \Delta_2}
}{
\tytm{\Gamma_1, \beta, \Gamma_2}{t_1}{\gamma \rightarrow \alpha}{\Xi_1, \beta, \Xi_2}
&
\tytm{\Xi_1, \beta, \Xi_2}{t_2}{\gamma}{\Delta_1, \bt, \Delta_2}
}
\end{align}
for some $\gamma\in\ty$ and some $\Xi_1, \Xi_2\in\fenv$ such that $|\Xi_2|=i$.
On the one hand, by applying Lemma~\ref{lemma:l3} to the first premise of \ref{der:app2}, we have that
$\der{\Sll}{\tytm{\Gamma_1, \Gamma_2}{\sbstt{t_1}{i}{u}}{\gamma \rightarrow \alpha}{\Xi_1, \Xi_2}}$.
On the other hand, by applying the induction hypothesis to the second premise of \ref{der:app2}, we obtain
$\der{\Sll}{\tytm{\Xi_1, \Xi_2}{\sbstt{t_2}{i}{u}}{\gamma}{\Theta, \Delta_2}}$. Then, the expected result follows from these
two judgements by an application of Rule (\textsc{app}).
  
\end{itemize}

\item $t = \iabs{t_1}$.\quad
In this case, $\der{\Sll}\tytm{\Gamma_1, \beta, \Gamma_2}{\iabs{t_1}}{\alpha}{\Delta_1, \bt, \Delta_2}$ is obtained
from a derivation whose last rule is Rule (\textsc{abs}):
\begin{align*}
\infer[\mbox{(\textsc{abs})}]{
\tytm{\Gamma_1, \beta, \Gamma_2}{\iabs{t_1}}{\alpha_1 \rightarrow \alpha_2}{\Delta_1, \bt, \Delta_2}
}{
\tytm{\Gamma_1, \beta, \Gamma_2, \gamma}{t_1}{\alpha_2}{\Delta_1, \bt, \Delta_2, \bt}
}
\end{align*}
for some $\alpha_1, \alpha_2 \in \ty$ such that $\alpha = \alpha_1 \rightarrow \alpha_2$.
Since $\sbstt{\iabs{t_1}}{i}{u} = \iabs{\sbstt{t_1}{i + 1}{u}}$, this can be obtained by showing that its premise holds:we get the
expected result by a straightforward application of the induction hypothesis:
\begin{align*}
\infer[\mbox{(\textsc{abs})}]{
\tytm{\Gamma_1, \Gamma_2}{\iabs{\sbstt{t_1}{i + 1}{u}}}{\alpha_1 \rightarrow \alpha_2}{\Theta, \Delta_2}
}{
\tytm{\Gamma_1, \Gamma_2, \alpha_1}{\sbstt{t_1}{i + 1}{u}}{\alpha_2}{\Theta, \Delta_2, \bt}
}
\end{align*}
\end{itemize}
\end{proof}
\end{lemma}

We now establish the subject-reduction property.

\begin{proposition}
Let $\Gamma, \Delta \in \fenv$, $t\in\ltm$, and $\alpha \in \ty$ be such that $\der{\Sll}\tytm{\Gamma}{t}{\alpha}{\Delta}$.
Then, for every $u\in\ltm$ such that $t \rightarrow_{\beta} u$, we have that $\der{\Sll}\tytm{\Gamma}{u}{\alpha}{\Delta}$.
\begin{proof}
The proof proceeds by induction on the derivation of $t \rightarrow_{\beta} u$.
The base case corresponds to the $\beta$-contraction of a reducible expression of the form $\iapp{\iabs{t_1}}{t_2}$,
and the inductive cases correspond to the congruence rules of Definition~\ref{def:beta}.
The latter being straightforward, we concentrate on the former. 

Let $t = \iapp{\iabs{t_1}}{t_2}$ and $u = \sbstt{t_1}{0}{t_2}$.
Since $\der{\Sll}\tytm{\Gamma}{\iapp{\iabs{t_1}}{t_2}}{\alpha}{\Delta}$, there exists a typing derivation that ends as follows:
\begin{align}
\infer[\mbox{(\textsc{app})}]{
\tytm{\Gamma}{\iapp{\iabs{t_1}}{t_2}}{\alpha}{\Delta}
}{
\infer[\mbox{(\textsc{abs})}]{
\tytm{\Gamma}{\iabs{t_1}}{\beta \rightarrow \alpha}{\Theta}
}{
\infer*{\tytm{\Gamma, \beta}{t_1}{\alpha}{\Theta, \bt}}{}
}
&
\infer*{\tytm{\Theta}{t_2}{\beta}{\Delta}}{}
}
\end{align}
for some $\beta\in\ty$ and $\Theta \in \fenv$.
Therefore, we have that $\der{\Sll}\tytm{\Gamma, \beta}{t_1}{\alpha}{\Theta, \bt}$
and $\der{\Sll}\tytm{\Theta}{t_2}{\beta}{\Delta}$.
Then, by Lemma~\ref{lemma:sr},
$\der{\Sll}\tytm{\Gamma}{\sbstt{t_1}{0}{t_2}}{\alpha}{\Delta}$. 
\end{proof}
\end{proposition}

\section{Conclusion}

We have developed a typing system for the linear $\lambda$-calculus in de Bruijn notation, and we have investigated several
of its properties. In particular, we have proved that it satisfies the subject reduction property.

Our typing system is the result of combining three key elements: Hodas' and Miller's model of resource consumption
\cite{Hodas-Miller}, de Bruijn indices\cite{deBruijn:indices}, and fragmentary environments. Hodas'
and Miller's model has been adopted by several authors, including \cite{allais:LIPIcs.TYPES.2017.1,CervesatoHodasPfenning}.
In fact, all the systems similar to that of Hodas and Miller that we are aware of stem from Hodas' and Miller's original work.
The notion of a fragmentary environment is based on the idea of keeping a trace of resources that have been consumed.
This idea has been developed and exploited by McBride to provide adequate typing rules for linear dependent
products~\cite{McBride:IGPON}.
The only other typing system we know of that combines the same three ingredients as ours is the system developed by
Allais \cite{allais:LIPIcs.TYPES.2017.1}.
As a result, his system is quite similar to ours. However, he does not prove the same properties that we do.
In particular, he does not establish subject reduction in its full generality, but only for a given reduction strategy.

The work we have presented in this paper has been carried out as part of the development of ACGtk,
the Abstract Categorical Grammar support system~\cite{pogodalla:ACGtk}.
Abstract Categorical Grammars are type-theoretic grammars based on linear logic~\cite{deGroote:ACL01}.
They can be seen as the freely generated case of Mellies' and Zeilberger's type refinement systems~\cite{MelliesZeilberger}.
Within these grammars, the abstract parse structures are represented by linear lambda-terms, and parsing amounts
to a proof-search problem in linear logic.

Unlike the usual typing system, which uses an additive typing rule for the application, the typing system we have developed
enjoys an interesting backward chaining interpretation. Consider the problem of constructing (if any) a linear $\lambda$-term
$t$ of type $\alpha$ with respect to a typing environment $\Gamma$. A typing judgement of the form
\begin{align*}
\tytm{\Gamma}{t}{\alpha}{\Delta}
\end{align*}
can be interpreted as $\Gamma$ and $\alpha$ being the input of the proof-search problem, and $t$ and $\Delta$ being its
output ($t$ being the proof of $\alpha$ and $\Delta$ being the resources not consumed in constructing $t$).

Our typing system is therefore suitable for proof search, including the generation of proofs in the form of lambda-terms
in de Bruijn notation.
We intend, in future work, to take advantage of this feature and develop proof-search algorithms
for a fragment of exponential multiplicative linear logic.
This will require possible extensions of our typing system.


\bibliography{references.bib}

\begin{thebibliography}{10}
\providecommand{\bibitemdeclare}[2]{}
\providecommand{\surnamestart}{}
\providecommand{\surnameend}{}
\providecommand{\urlprefix}{Available at }
\providecommand{\url}[1]{\texttt{#1}}
\providecommand{\href}[2]{\texttt{#2}}
\providecommand{\urlalt}[2]{\href{#1}{#2}}
\providecommand{\doi}[1]{doi:\urlalt{https://doi.org/#1}{#1}}
\providecommand{\eprint}[1]{arXiv:\urlalt{https://arxiv.org/abs/#1}{#1}}
\providecommand{\bibinfo}[2]{#2}

\bibitemdeclare{inproceedings}{allais:LIPIcs.TYPES.2017.1}
\bibitem{allais:LIPIcs.TYPES.2017.1}
\bibinfo{author}{G.~\surnamestart Allais\surnameend} (\bibinfo{year}{2019}):
  \emph{\bibinfo{title}{{Typing with Leftovers - A mechanization of
  Intuitionistic Multiplicative-Additive Linear Logic}}}.
\newblock In \bibinfo{editor}{A.~\surnamestart Abel\surnameend},
  \bibinfo{editor}{F.~\surnamestart Nordvall~Forsberg\surnameend} \&
  \bibinfo{editor}{A.~\surnamestart Kaposi\surnameend}, editors: {\slshape
  \bibinfo{booktitle}{23rd International Conference on Types for Proofs and
  Programs (TYPES 2017)}}, {\slshape \bibinfo{series}{Leibniz International
  Proceedings in Informatics (LIPIcs)}} \bibinfo{volume}{104},
  \bibinfo{publisher}{Schloss Dagstuhl -- Leibniz-Zentrum f{\"u}r Informatik},
  \bibinfo{address}{Dagstuhl, Germany}, pp. \bibinfo{pages}{1:1--1:22},
  \doi{10.4230/LIPIcs.TYPES.2017.1}.

\bibitemdeclare{inproceedings}{Benton:TLCA93}
\bibitem{Benton:TLCA93}
\bibinfo{author}{N.~\surnamestart Benton\surnameend},
  \bibinfo{author}{G.~\surnamestart Bierman\surnameend},
  \bibinfo{author}{V.~\surnamestart De~Paiva\surnameend} \&
  \bibinfo{author}{M.~\surnamestart Hyland\surnameend} (\bibinfo{year}{1993}):
  \emph{\bibinfo{title}{A term calculus for intuitionistic linear logic}}.
\newblock In \bibinfo{editor}{M.~\surnamestart Bezem\surnameend} \&
  \bibinfo{editor}{J.F. \surnamestart Groote\surnameend}, editors: {\slshape
  \bibinfo{booktitle}{International Conference on Typed Lambda Calculi and
  Applications, TLCA'93}}, {\slshape \bibinfo{series}{Lecture Notes in Computer
  Science}} \bibinfo{volume}{664}, \bibinfo{publisher}{Springer}, pp.
  \bibinfo{pages}{75--90}, \doi{10.1007/BFb0037099}.

\bibitemdeclare{article}{deBruijn:indices}
\bibitem{deBruijn:indices}
\bibinfo{author}{N.G. \surnamestart de~Bruijn\surnameend}
  (\bibinfo{year}{1972}): \emph{\bibinfo{title}{Lambda calculus notations with
  nameless dummies, a tool for automatic formula manipulation, with an
  application to the {C}hurch-{R}osser theorem}}.
\newblock {\slshape \bibinfo{journal}{Indagationes Mathematicae}}
  \bibinfo{volume}{34}, pp. \bibinfo{pages}{381--392},
  \doi{10.1016/1385-7258(72)90034-0}.

\bibitemdeclare{article}{CervesatoHodasPfenning}
\bibitem{CervesatoHodasPfenning}
\bibinfo{author}{I.~\surnamestart Cervesato\surnameend}, \bibinfo{author}{J.~S.
  \surnamestart Hodas\surnameend} \& \bibinfo{author}{F.~\surnamestart
  Pfenning\surnameend} (\bibinfo{year}{2000}): \emph{\bibinfo{title}{Efficient
  resource management for linear logic proof search}}.
\newblock {\slshape \bibinfo{journal}{Theoretical Computer Science}}
  \bibinfo{volume}{232}(\bibinfo{number}{1}), pp. \bibinfo{pages}{133--163},
  \doi{10.1016/S0304-3975(99)00173-5}.

\bibitemdeclare{article}{Church:STT}
\bibitem{Church:STT}
\bibinfo{author}{A.~\surnamestart Church\surnameend} (\bibinfo{year}{1940}):
  \emph{\bibinfo{title}{A Formulation of the Simple Theory of Types}}.
\newblock {\slshape \bibinfo{journal}{Journal of Symbolic Logic}}
  \bibinfo{volume}{5}, pp. \bibinfo{pages}{56--68}, \doi{10.2307/2266170}.

\bibitemdeclare{article}{Girard:LL}
\bibitem{Girard:LL}
\bibinfo{author}{J.-Y. \surnamestart Girard\surnameend} (\bibinfo{year}{1987}):
  \emph{\bibinfo{title}{Linear Logic}}.
\newblock {\slshape \bibinfo{journal}{Theoretical Computer Science}}
  \bibinfo{volume}{50}, pp. \bibinfo{pages}{1--102},
  \doi{10.1016/0304-3975(87)90045-4}.

\bibitemdeclare{inproceedings}{deGroote:ACL01}
\bibitem{deGroote:ACL01}
\bibinfo{author}{Ph. \surnamestart de~Groote\surnameend}
  (\bibinfo{year}{2001}): \emph{\bibinfo{title}{{Towards Abstract Categorial
  Grammars}}}.
\newblock In: {\slshape \bibinfo{booktitle}{Association for Computational
  Linguistics, 39th Annual Meeting and 10th Conference of the European Chapter,
  Proceedings of the Conference}}, pp. \bibinfo{pages}{148--155},
  \doi{10.3115/1073012.1073045}.

\bibitemdeclare{inproceedings}{pogodalla:ACGtk}
\bibitem{pogodalla:ACGtk}
\bibinfo{author}{M.~\surnamestart Guillaume\surnameend},
  \bibinfo{author}{S.~\surnamestart Pogodalla\surnameend} \&
  \bibinfo{author}{V.~\surnamestart Tourneur\surnameend}
  (\bibinfo{year}{2024}): \emph{\bibinfo{title}{{ACGtk: A Toolkit for
  Developing and Running Abstract Categorial Grammars}}}.
\newblock In \bibinfo{editor}{J.~\surnamestart Gibbons\surnameend} \&
  \bibinfo{editor}{D.~\surnamestart Miller\surnameend}, editors: {\slshape
  \bibinfo{booktitle}{{Functional and Logic Programming. 17th International
  Symposium, FLOPS 2024}}}, {\slshape \bibinfo{series}{Lecture Notes in
  Computer Science}} \bibinfo{volume}{14659}, \bibinfo{publisher}{{Springer}},
  pp. \bibinfo{pages}{13--30}, \doi{10.1007/978-981-97-2300-3_2}.

\bibitemdeclare{article}{Hodas-Miller}
\bibitem{Hodas-Miller}
\bibinfo{author}{J.S. \surnamestart Hodas\surnameend} \&
  \bibinfo{author}{D.~\surnamestart Miller\surnameend} (\bibinfo{year}{1994}):
  \emph{\bibinfo{title}{{Logic Programming in a Fragment of Intuitionistic
  Linear Logic}}}.
\newblock {\slshape \bibinfo{journal}{Information and Computation}}
  \bibinfo{volume}{110}(\bibinfo{number}{2}), pp. \bibinfo{pages}{327--365},
  \doi{10.1006/inco.1994.1036}.

\bibitemdeclare{inproceedings}{McBride:IGPON}
\bibitem{McBride:IGPON}
\bibinfo{author}{C.~\surnamestart McBride\surnameend} (\bibinfo{year}{2016}):
  \emph{\bibinfo{title}{I Got Plenty o' Nuttin'}}.
\newblock In \bibinfo{editor}{S.~\surnamestart Lindley\surnameend},
  \bibinfo{editor}{C~\surnamestart McBride\surnameend}, \bibinfo{editor}{Ph.~W.
  \surnamestart Trinder\surnameend} \& \bibinfo{editor}{D.~\surnamestart
  Sannella\surnameend}, editors: {\slshape \bibinfo{booktitle}{A List of
  Successes That Can Change the World - Essays Dedicated to Philip Wadler on
  the Occasion of His 60th Birthday}}, {\slshape \bibinfo{series}{Lecture Notes
  in Computer Science}} \bibinfo{volume}{9600}, \bibinfo{publisher}{Springer},
  pp. \bibinfo{pages}{207--233}, \doi{10.1007/978-3-319-30936-1_12}.

\bibitemdeclare{inproceedings}{MelliesZeilberger}
\bibitem{MelliesZeilberger}
\bibinfo{author}{P.{-}A. \surnamestart Melli{\`{e}}s\surnameend} \&
  \bibinfo{author}{N.~\surnamestart Zeilberger\surnameend}
  (\bibinfo{year}{2015}): \emph{\bibinfo{title}{Functors are Type Refinement
  Systems}}.
\newblock In: {\slshape \bibinfo{booktitle}{Proceedings of the 42nd Annual
  {ACM} {SIGPLAN-SIGACT} Symposium on Principles of Programming Languages,
  {POPL} 2015}}, pp. \bibinfo{pages}{3--16}, \doi{10.1145/2676726.2676970}.

\end{thebibliography}

\end{document}